\documentclass[acmsmall]{acmart}
\makeatletter
\def\@ACM@copyright@check@cc{}
\makeatother

\AtBeginDocument{%
  }

\setcopyright{cc}
\setcctype{by}
\copyrightyear{2024}
\acmYear{2024}
\acmJournal{PACMPL}
\acmVolume{8}
\acmNumber{PLDI}
\acmArticle{184}
\acmMonth{6}
\acmDOI{10.1145/3656414}
\acmSubmissionID{pldi24main-p195-p}
\received{2023-11-16}
\received[accepted]{2024-03-31}

\usepackage{booktabs}   
\usepackage{subcaption} 

\usepackage{hhline}
\usepackage[normalem]{ulem}
\usepackage{mathtools}
\usepackage{blkarray, bigstrut}
\usepackage{graphicx,wrapfig,lipsum}
\usepackage{tcolorbox}
\usepackage{enumitem}
\usepackage{array}
\usepackage{algorithm}
\usepackage{algorithmic}
\usepackage{mathpartir}
\usepackage{tikz}
\usetikzlibrary{shapes,arrows,snakes,decorations.markings}
\usetikzlibrary{svg.path,arrows.meta, calc}

\usetikzlibrary{shapes.geometric}
\usetikzlibrary{arrows.meta,arrows}
\usetikzlibrary{decorations.text}
\usepackage{multirow}
\usepackage[position=b]{subcaption}

\newtheorem{lem}{Lemma}[section]
\newtheorem{thm}{Theorem}[section]
\newtheorem{defn}{Definition}

\theoremstyle{definition}

\numberwithin{subcase}{case}

\numberwithin{subsubcase}{subcase}

\numberwithin{subsubsubcase}{subsubcase}
\newcommand{\st}{~.~}

\definecolor{periwinkle}{rgb}{0.8, 0.8, 1.0}
\definecolor{powderblue}{rgb}{0.69, 0.88, 0.9}
\definecolor{sandstorm}{rgb}{0.93, 0.84, 0.25}
\definecolor{trueblue}{rgb}{0.0, 0.45, 0.81}

\newlength\Origarrayrulewidth

\newcommand{\omitthis}[1]{}

\let\oldstar\star
\renewcommand{\star}{\oldstar}

\newcommand{\im}[1]{\ensuremath{#1}}

\newcommand{\kw}[1]{\im{\mathtt{#1}}}

\tikzstyle{decision} = [diamond, draw, fill=blue!20,
    text width=4.5em, text badly centered, node distance=3cm, inner sep=0pt]
\tikzstyle{block} = [draw, very thick, fill=white, rectangle,
    minimum height=2.5em, minimum width=6em, text centered]

\tikzstyle{line} = [draw, -latex']
\tikzstyle{cloud} = [draw, ellipse,fill=red!20, node distance=3cm,    minimum height=2em]

\tikzstyle{vecArrow} = [thick, decoration={markings,mark=at position
  1 with {\arrow[semithick]{open triangle 60}}},
  double distance=1.4pt, shorten >= 5.5pt,
  preaction = {decorate},
  postaction = {draw,line width=1.4pt, white,shorten >= 4.5pt}]
\tikzstyle{innerWhite} = [semithick, white,line width=1.4pt, shorten >= 4.5pt]

\newcommand{\dist}{P}
\newcommand{\mech}{M}
\newcommand{\univ}{\mathcal{X}}

\newcommand{\qrounds}{r}
\newcommand{\answer}{a}
\newcommand{\sample}{X}
\newcommand{\ex}[2]{{\ifx&#1& \mathbb{E} \else \underset{#1}{\mathbb{E}} \fi \left[#2\right]}}
\newcommand{\pr}[2]{{\ifx&#1& \mathbb{P} \else \underset{#1}{\mathbb{P}} \fi \left[#2\right]}}
\newcommand{\var}[2]{{\ifx&#1& \mathrm{Var} \else \underset{#1}{\mathrm{Var}} \fi \left[#2\right]}}

\newcommand{\from}{:}
\newcommand{\sep}{ \ | \ }

\newcommand{\lin}{\kw{in}}
\newcommand{\lex}{\kw{ex}}
\newcommand{\expr}{e}
\newcommand{\aexpr}{a}
\newcommand{\bexpr}{b}

\newcommand{\qexpr}{\psi}
\newcommand{\qval}{\alpha}
\newcommand{\query}{{\tt query}}
\newcommand{\qquery}{{\tt q}}
\newcommand{\eif}{\;\kw{if}\;}

\newcommand{\ewhile}{\;\kw{while}\;}

\newcommand{\eskip}{\kw{skip}}
\newcommand{\edo}{\;\kw{do}\;}
\newcommand{\esign}{~\kw{sign}~}
\newcommand{\elog}{~\kw{log}~}

\newcommand{\cdom}{\mathcal{C}}
\newcommand{\ldom}{\mathcal{L}}

\newcommand{\config}[1]{\langle #1 \rangle}

\newcommand{\clabel}[1]{\left[ #1 \right]}

\newcommand{\etrue}{\kw{true}}
\newcommand{\efalse}{\kw{false}}

\newcommand{\env}{\rho}

\newcommand{\aarrow}{\Downarrow_a}
\newcommand{\barrow}{\Downarrow_b}
\newcommand{\earrow}{\Downarrow_e}
\newcommand{\qarrow}{\Downarrow_q}

\newcommand{\assign}[2]{ \mathrel{ #1  \leftarrow #2 } }

\newcommand{\vtrace}{\kw{\tau}}

\newcommand{\tdom}{\mathcal{T}}
\newcommand{\trace}{\kw{\tau}}

\newcommand{\vcounter}{\kw{cnt}}

\newcommand{\event}{\kw{\epsilon}}
\newcommand{\eventset}{\mathcal{E}}

\newcommand{\eventdep}{\mathsf{DEP_{\kw{e}}}}
\newcommand{\asn}{\kw{{asn}}}
\newcommand{\test}{\kw{{test}}}

\newcommand{\diff}{\kw{Diff}}

\newcommand{\tracecat}{{\scriptscriptstyle ++}}
\newcommand{\traceadd}{{\small ::}}

\newcommand{\walks}{\mathcal{WK}}

\newcommand{\len}{\kw{len}}
\newcommand{\lvar}{\kw{LV}}

\newcommand{\qvar}{\kw{QV}}

\newcommand{\vardep}{\mathsf{DEP_{var}}}

\newcommand{\tlabel}{\mathbb{TL}}

\newcommand{\traceG}{\kw{{G_{trace}}}}
\newcommand{\traceV}{\kw{{V_{trace}}}}
\newcommand{\traceE}{\kw{{E_{trace}}}}
\newcommand{\traceF}{\kw{{Q_{trace}}}}
\newcommand{\traceW}{\kw{{W_{trace}}}}

\newcommand{\flowsto}{\kw{flowsTo}}
\newcommand{\live}{\kw{RD}}

\newcommand{\progG}{\kw{{G_{est}}}}
\newcommand{\progV}{\kw{{V_{est}}}}
\newcommand{\progE}{\kw{{E_{est}}}}
\newcommand{\progF}{\kw{{Q_{est}}}}
\newcommand{\progW}{\kw{{W_{est}}}}
\newcommand{\progA}{{\kw{A_{est}}}}

\newcommand{\vertxs}{\kw{V}}

\newcommand{\qflag}{\kw{Q}}
\newcommand{\edges}{\kw{E}}
\newcommand{\weights}{\kw{W}}
\newcommand{\qlen}{\len^{\tt q}}

\newcommand{\pathsearch}{\mathsf{AdaptBD}}

\newcommand{\abst}[1]{\kw{abs}{#1}}

\newcommand{\absG}{\abst{\kw{G}}}
\newcommand{\absV}{\abst{\kw{V}}}
\newcommand{\absE}{\abst{\kw{E}}}

\newcommand{\absclr}{{\kw{TB}}}

\newcommand{\constdom}{\mathcal{SC}}
\newcommand{\dcdom}{\mathcal{DC}}

\newcommand{\highlight}[1]{#1}
\newcommand{\review}[1]{}

\newcommand{\todo}[1]{}
\newcommand{\todomath}[1]{}
\newcommand{\jl}[1]{}
\newcommand{\jlside}[1]{}
\newcommand{\dg}[1]{}
\newcommand{\dgside}[1]{}
\newcommand{\mg}[1]{}
\newcommand{\mgside}[1]{}
\newcommand{\wq}[1]{}
\newcommand{\wqside}[1]{}

\newcommand{\THESYSTEM}{\textsf{AdaptFun}}

\begin{document}

\title{Program Analysis for Adaptive Data Analysis}

\author{Jiawen Liu}
\orcid{0009-0009-0900-9910}
\affiliation{%
  \institution{Boston University}
  \city{Boston}
  \country{USA}}
\email{jiawenl@bu.edu}

\author{Weihao Qu}
\orcid{0000-0003-1027-6556}
\affiliation{%
  \institution{Monmouth University}
  \city{West Long Branch}
  \country{USA}}
\email{wqu@monmouth.edu}

\author{Marco Gaboardi}
\orcid{0000-0002-5235-7066}
\affiliation{%
  \institution{Boston University}
  \city{Boston}
  \country{USA}}
\email{gaboardi@bu.edu}

\author{Deepak Garg}
\orcid{0000-0002-0888-3093}
\affiliation{%
  \institution{MPI-SWS}
  \city{MPI-SWS}
  \country{Germany}}
\email{dg@mpi-sws.org}

\author{Jonathan Ullman}
\orcid{0000-0002-0323-5564}
\affiliation{%
  \institution{Northeastern University}
  \city{Boston}
  \country{USA}}
\email{jullman@gmail.com}

\begin{abstract}
  Data analyses are usually designed to identify some property of the population from which the data are drawn, generalizing beyond the specific data sample. For this reason, data analyses are often designed in a way that guarantees that they produce a low generalization error.
  That is, they are designed so that the result of a data analysis run on a sample data does not differ too much from the result one would achieve by running the analysis over the entire population.

  An adaptive data analysis can be seen as a process composed by multiple queries interrogating some data, where the choice of which query to run next may rely on the results of previous queries.
  The generalization error of each individual query/analysis can be controlled by using an array of well-established statistical techniques.
  However, when queries are arbitrarily composed, the different errors can propagate through the chain of different queries and bring to a high generalization error.
  To address this issue, data analysts are designing several techniques that not only guarantee bounds on the generalization errors of single queries, but that also guarantee bounds on the generalization error of the composed analyses.
  The choice of which of these techniques to use, often depends on the chain of queries that an adaptive data analysis can generate.

  In this work, we consider adaptive data analyses implemented as while-like programs and we design a program analysis which can help with identifying which technique to use to control their generalization errors.
  More specifically, we formalize the intuitive notion of \emph{adaptivity} as a quantitative property of programs.
  We do this because the adaptivity level of a data analysis is a key measure to choose the right technique.
  Based on this definition, we design a program analysis for soundly approximating this quantity.
  The program analysis generates a representation of the data analysis as a weighted dependency graph, where the weight is an upper bound on the number of times each variable can be reached, and uses a path search strategy to guarantee an upper bound on the adaptivity.
  We implement our program analysis  and show that it can help to analyze the adaptivity of several concrete data analyses with different adaptivity structures.
  \end{abstract}

\ccsdesc[500]{Software and its engineering~Formal language definitions}
\ccsdesc[300]{Theory of computation~Formalisms}

\keywords{Adaptive data analysis, program analysis, dependency graph}

\maketitle

\section{Introduction}
\label{sec:intro}
Consider a dataset $X$ consisting of $n$ independent samples from some unknown population $\dist$.  How can we ensure that the conclusions drawn from $X$ \emph{generalize} to the population $\dist$?  Despite decades of research in statistics and machine learning on methods for ensuring generalization, there is an increased recognition that many scientific findings generalize poorly (e.g.
\cite{Ioannidis05,GelmanL13}
).  While there are many reasons a conclusion might fail to generalize, one that is receiving increasing attention is \emph{adaptivity}, which occurs when the choice of method for analyzing the dataset depends on previous interactions with the same dataset~\cite{GelmanL13}.
 Adaptivity can arise from many common practices, such as exploratory data analysis, using the same data set for feature selection and regression, and the re-use of datasets across research projects.  Unfortunately, adaptivity invalidates traditional methods for ensuring generalization and statistical validity, which assume that the method is selected independently of the data. The misinterpretation of adaptively selected results has even been blamed for a ``statistical crisis'' in empirical science~\cite{GelmanL13}.

\begin{figure}
    \centering
    \includegraphics[width=0.7\columnwidth]{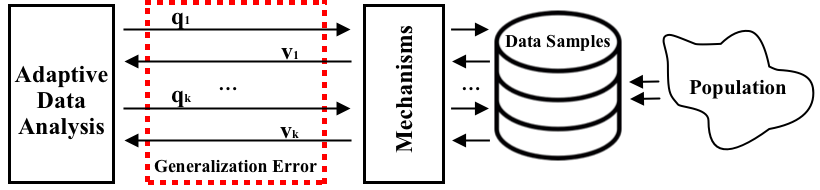}
    \caption{Overview of our Adaptive Data Analysis model. We have a population that we are interested in studying, and a dataset containing individual samples from this population. The adaptive data analysis we are interested in running has access to the dataset through queries of some pre-determined family (e.g., statistical or linear queries) mediated by a mechanism. This mechanism uses randomization to reduce the generalization error of the queries issued to the data.}
    \label{fig:adaptivity-model-overview}
\vspace{-0.5cm}
\end{figure}

A line of work initiated by \citet{DworkFHPRR15}, \citet{HardtU14} posed the question: Can we design \emph{general-purpose} methods that ensure generalization in the presence of adaptivity, together with guarantees on their accuracy?  The idea that has emerged in these works is to use randomization to help ensure generalization. Specifically, these works have proposed to mediate the access of an adaptive data analysis to the data by means of queries from some pre-determined family (we will consider here a specific family of queries often called "statistical" or "linear" queries) that are sent to a  \emph{mechanism} which uses some randomized process to guarantee that the result of the query does not depend too much on the specific
sampled dataset. This guarantees that the result of the queries generalizes well. This approach is described in Fig.~\ref{fig:adaptivity-model-overview}.
This line of work has identified many new algorithmic techniques for ensuring generalization in adaptive data analysis, leading to algorithms with greater statistical power than all previous approaches. Common methods proposed by these works include, the addition of noise to the result of a query, data splitting, using sampling methods, etc. Moreover, these works have also identified problematic strategies for adaptive analysis, showing limitations on the statistical power one can hope to achieve. Subsequent works have then further extended the methods and techniques in this approach and further extended the theoretical underpinning of this approach, e.g.~\cite{dwork2015reusable,dwork2015generalization,BassilyNSSSU16,UllmanSNSS18,FeldmanS17,jung2019new,SteinkeZ20,RogersRSSTW20,DaganK22,Blanc23}.

A key development in this line of work is that the best method for ensuring generalization in an adaptive data analysis depends to a large extent on the number of \emph{rounds of adaptivity}, the depth of the chain of queries. As an informal example, the program $x \leftarrow q_1(D);y \leftarrow q_2(D,x);z \leftarrow q_3(D,y)$ has three rounds of adaptivity, since $q_2$  depends on $D$ not only directly because it is one of its input but also via the result of $q_1$, which is also run on $D$, and similarly,  $q_3$ depends on $D$ directly but also via the result of $q_2$, which in turn depends on the result of $q_1$. The works we discussed above showed that, not only does the analysis of the generalization error depend on the number of rounds, but knowing the number of rounds actually allows one to choose methods that lead to the smallest possible generalization error.
For instance, these works showed that when an adaptive data analysis uses a large number of rounds of adaptivity, then a low generalization error can be achieved by the Gaussian mechanism
adding to the result of each query Gaussian noise scaled to the number of rounds. When instead  an adaptive data analysis uses a small number of rounds of adaptivity then a low generalization error can be achieved by using more specialized methods, such as the data splitting mechanism or the reusable holdout technique from~\citet{DworkFHPRR15}.

To better understand this idea, we show in Fig.~\ref{fig:generalization_errors} three experiments showcasing these situations:
In Fig.~\ref{fig:generalization_errors}(a) we show the results of a specific analysis\footnote{We will use formally a program implementing this analysis (Fig.~\ref{fig:overview-example}) as a running example in the rest of the paper.}
with two rounds of adaptivity.

 \highlight{This analysis can be seen as a classifier over a population of 400 attributes and 1 label. Using a dataset sampled from this population, this analysis first runs 400 non-adaptive queries on the first 400 attributes of this dataset,
  computing correlations between each attribute and the label. Then, it runs the last query depending on all these correlations. The adaptivity of this anlaysis is $2$ because only the last query relies on the results of its previous queries' results.
  Without any mechanism the generalization error of the last query is pretty large, and the lower generalization error is achieved when the data-splitting method is used.
  In Fig.~\ref{fig:generalization_errors}(c), we use the same data analysis program as in Fig.~\ref{fig:generalization_errors}(a). It runs over a dataset with larger data size and shows only the root mean square error of the last \emph{adaptive} query when the
total query number varies along the x-axis. The result mostly accounts for
the fact that total query numbers affect the generalization error as well.
Fig.~\ref{fig:generalization_errors}(c) also highlights that different mechanisms, for the same analysis, produce results with different generalization errors, and
using a larger data set changes the magnitude of the error. }
In Fig.~\ref{fig:generalization_errors}(b), we show the results of a specific analysis\footnote{We will present this analysis formally in Section~\ref{sec:examples}.} with four hundred rounds of adaptivity.
At each step, this analysis runs an adaptive query based on the results of the previous ones. Without any mechanism, the generalization error of most of the queries is pretty large, and this error can be lowered by using Gaussian noise.
\highlight{We use in total three different mechanisms here: Gaussian mechanism which adds the result of each query Gaussian noise; Data Splitting mechanism
that splits the data into a few parts; Thresholdout mechanism that uses the reusable holdout technique from~\citet{DworkFHPRR15}.  }
{\small
\begin{figure}
\centering
\begin{subfigure}{.322\textwidth}
\begin{centering}
\includegraphics[width=1.0\textwidth]{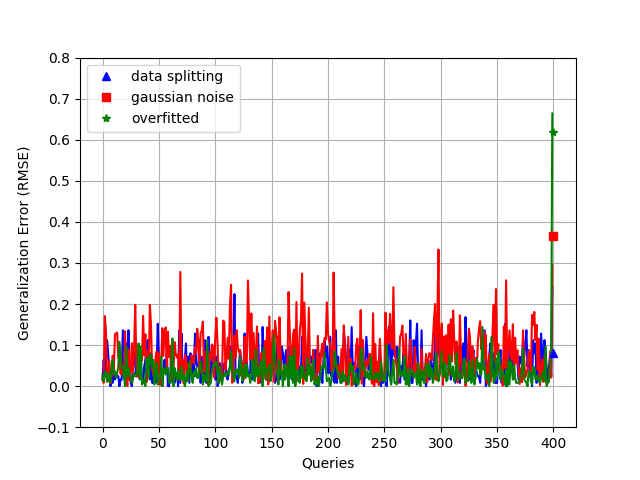}
\caption{}
\end{centering}
\end{subfigure}
\quad
\begin{subfigure}{.322\textwidth}
\begin{centering}
\includegraphics[width=1.0\textwidth]{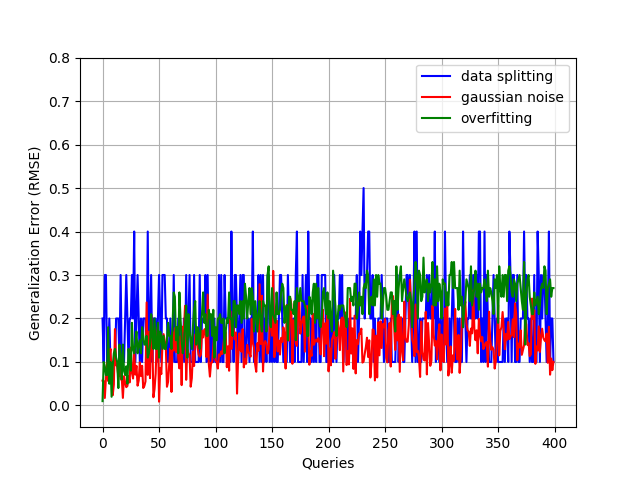}
\caption{}
\end{centering}
\end{subfigure}
\begin{subfigure}{.322\textwidth}
\begin{centering}
\includegraphics[width=1.0\textwidth]{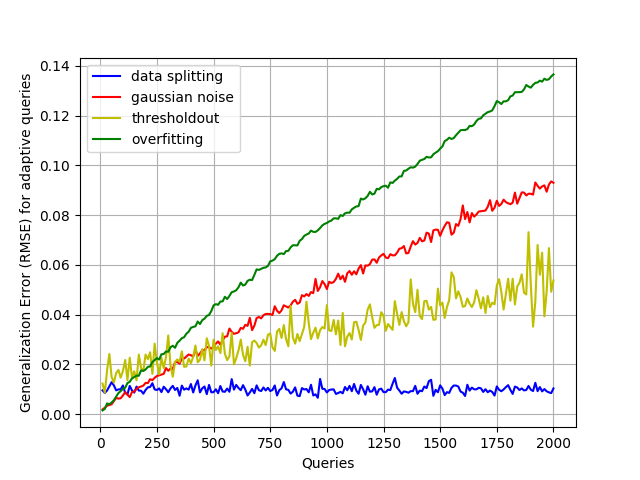}
\caption{}
\end{centering}
\end{subfigure}
\vspace{-0.2cm}
 \caption{
 The generalization errors of two adaptive data analysis examples, under different choices of mechanisms.
 (a)  2 rounds adaptivity.
 (b)  400 rounds adaptivity.
 (c) 2 rounds adaptivity with varied query numbers.
}
\label{fig:generalization_errors}
\vspace{-0.2cm}
\end{figure}
}

This scenario motivates us to explore the design of program analysis techniques that can be used to estimate the number of \emph{rounds of adaptivity} that a program implementing a data analysis can perform. These techniques could be used to help a data analyst in the choice of the mechanism to use,
and they
could ultimately be integrated into a tool for adaptive data analysis such as the \emph{Guess and Check} framework by~\citet{RogersRSSTW20}.

The first problem we face is \emph{how to formally define} a model for adaptive data analysis which is general enough to support the methods we discussed above and which would permit to formulate the notion of adaptivity these methods use. We take the approach of designing a programming framework for submitting queries to some \emph{mechanism} giving access to the data mediated by one of the techniques we mentioned before, e.g., adding Gaussian noise, randomly selecting a subset of the data, using the reusable holdout technique, etc. In this approach, a program models an \emph{analyst} asking a sequence of queries to the mechanism. The mechanism runs the queries on the data applying one of the methods above and returns the result to the program. The program can then use this result to decide which query to run next. Overall, we are interested in controlling the generalization error of the query results returned by the mechanism, by means of the adaptivity.

The second problem we face is \emph{how to define the adaptivity of a given program}.
Intuitively, a query $Q$ may depend on another query $P$, if there are two values that $P$ can return which affect in different ways the execution of $Q$.
For example, as shown in \cite{dwork2015reusable}, and as we did in our example in Fig.~\ref{fig:generalization_errors}(a), one can design a machine learning algorithm for constructing a classifier which first computes each feature's correlation with the label via a sequence of queries, and then constructs the classifier based on the correlation values. If one feature's correlation changes, the classifier depending on features is also affected.
This notion of dependency builds on the execution trace as a \emph{causal history}. In particular, we are interested in the history or provenance of a query up until this is executed, and we are not then concerned about how the result is used --- except for tracking whether the result of the query may further cause some other queries. This is because we focus on the generalization error of queries and not their post-processing. %
To formalize this intuition as a quantitative program property,
we use a trace semantics recording the execution history of programs on some given input --- and we create a dependency graph, where the dependency between different variables (queries are also assigned to variables) is explicit and tracks which variable is associated with a query request. We then enrich this graph with weights describing the number of times each variable is evaluated in a program evaluation starting with an initial state. The adaptivity is then defined as the length of the walk visiting the most query-related variables on this graph\footnote{Formally, graphs will be well-defined only for terminating programs, this will guarantee that the walk is finite}. In other words, we define adaptivity as a \emph{quantitative form of program dependency}.

The third problem we face is \emph{how to estimate the adaptivity of a given program}.
The adaptive data analysis model we consider and our definition of adaptivity suggest that for this task we can use a  program analysis that is based on some form of dependency analysis. This analysis needs to take into consideration:
1) the fact that, in general, a query $Q$ is not a monolithic block but rather it may depend, through the use of variables and values, on other parts of the program. Hence, it needs to consider some form of data flow analysis.
2) the fact that, in general, the decision on whether to run a query or not may depend on some other value. Hence,
 it needs to consider some form of control flow analysis.
 3) the fact that, in general, we are not only interested in whether there is a dependency or not, but in the length of the chain of dependencies. Hence, it needs to consider some quantitative information about the program dependencies.

To address these considerations and be able to estimate a sound upper bound on the adaptivity of a program,
we develop a static program analysis algorithm, named {\THESYSTEM}, which combines data flow and control flow analysis with reachability bound analysis~\cite{GulwaniZ10}. This combination gives tighter bounds on the adaptivity of a program than the ones one would achieve by directly using the data and control flow analyses or the ones that one would achieve by directly using reachability bound analysis techniques alone. We evaluate {\THESYSTEM} on a number of examples showing that it is able to efficiently estimate precise upper bounds on the adaptivity of different programs.
All the proofs and extended definitions can be found in the supplementary material.

To summarize, our work aims at the design of a static analysis for programs implementing adaptive analysis that can estimate their rounds of adaptivity. Specifically, our contributions are:
\begin{enumerate}
    \item A programming framework for adaptive data analyses where programs represent analysts that can query generalization-preserving mechanisms mediating the access to some data.
    \item
    A formal definition of the notion of adaptivity under the analyst-mechanism model,
    built on a variable-based dependency graph constructed using sets of program execution traces.
    \item
    A static program analysis algorithm {\THESYSTEM} combining data flow, control flow and  reachability bound analysis in order to provide tight bounds on the adaptivity of a program.
    \item A soundness proof of the program analysis showing that the adaptivity estimated by {\THESYSTEM} bounds the true adaptivity of the program.
    \item A prototype implementation of {\THESYSTEM} and an experimental evaluation showing its accuracy and efficiency of the adaptivity estimation on several examples.
    We also provide an evaluation showing how the generalization error of several real-world data analyses can be effectively reduced by using the information provided by {\THESYSTEM}.
\end{enumerate}

\section{Overview}
\label{sec:overview}
\subsection{Some results in Adaptive Data Analysis}
In Adaptive Data Analysis, an \emph{analyst} is interested in studying some distribution $\dist$ over some domain $\univ$.  Following previous works~\cite{DworkFHPRR15,HardtU14,BassilyNSSSU16}, we focus on the setting where the analyst is interested in answers to \emph{statistical queries} (also known as \emph{linear queries}) over the distribution.  A statistical query is usually defined by some function $\qquery \from \univ \to [-1,1]$ (often other codomains such as $[0,1]$ or $[-R,+R]$, for some $R$, are considered).  The analyst wants to learn the \emph{population mean}, which is defined as
$\qquery(\dist) = \ex{\sample \sim \dist}{\qquery(\sample)}$.
We assume that the distribution $\dist$ can only be accessed via a set of \emph{samples} $\sample_1,\dots,\sample_n$ drawn independently and identically distributed (i.i.d.) from $\dist$.  These samples are held by a mechanism $\mech(\sample_1,\dots,\sample_n)$ who receives the query $\qquery$ and computes an answer
$\answer \approx \qquery(\dist)$.
The na\"ive way to approximate the population mean is to use the \emph{empirical mean}, which (abusing notation) is defined as
$\qquery(\sample_1,\dots,\sample_n) = \frac{1}{n} \sum_{i=1}^{n} \qquery(X_i)$.
However, the mechanism $M$ can adopt some methods for improving the generalization error $| a- \qquery(\dist)|$.

In this work we consider analysts that ask a sequence of $k$ queries $\qquery_1,\dots,\qquery_k$.  If the queries are all chosen in advance, independently of the answers $a_1,\dots,a_k$ of each other, then we say they are \emph{non-adaptive}.  If the choice of each query $\qquery_j$ depends on the prefix $\qquery_1,\answer_1,\dots,\qquery_{j-1},\answer_{j-1}$ then they are \emph{fully adaptive}.  An important intermediate notion is \emph{$\qrounds$-round adaptive}, where the sequence can be partitioned into $\qrounds$ batches of non-adaptive queries.  Note that non-adaptive queries are $1$-round and fully adaptive queries are $k$-round adaptive.

We now review what is known about the problem of answering $r$-round adaptive queries.
\begin{thm}[\cite{BassilyNSSSU16}]
\label{thm:nonadapt-adapt}
\begin{enumerate}

\item For any distribution $\dist$, and any $k$ \emph{non-adaptive} statistical queries, with high probability,
$
\max_{j=1,\dots,k} | \answer_j - \qquery_j(\dist) | = O\left( \sqrt{\frac{\log k}{n}}  \right)
$.
\item For any distribution $\dist$, and  any $k$  \emph{$\qrounds$-round adaptive} statistical queries, with $\qrounds \geq 2$, with high probability, the empirical mean (rounded to an appropriate number of bits of precision)\footnote{With infinite precision even two queries may give unbounded error, when the first query's result encodes the whole data.} satisfies:\\
$
\max_{j=1,\dots,k} | \answer_j - \qquery_j(\dist) | = O\left( \sqrt{  \frac{k}{n}}  \right)
$
\end{enumerate}
\end{thm}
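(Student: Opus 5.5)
Part (1) follows from a concentration bound plus a union bound, and part (2) from a description-length (compression) argument. In both parts the mechanism answers each query with its empirical mean $\answer_j = \qquery_j(\sample_1,\dots,\sample_n)$; in part (2) the answer is additionally rounded to $b$ bits of precision.

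\textbf{Part (1).} Since the queries are non-adaptive, each $\qquery_j$ is fixed before the sample is drawn. Hence $\qquery_j(\sample_1),\dots,\qquery_j(\sample_n)$ are i.i.d. random variables in $[-1,1]$ with mean $\qquery_j(\dist)$. Hoeffding's inequality gives
\[
\Pr\bigl[|\answer_j - \qquery_j(\dist)| > t\bigr] \le 2e^{-nt^2/2}.
\]
A union bound over the $k$ queries with $t = \sqrt{2\log(2k/\beta)/n}$ then gives failure probability at most $\beta$. This yields the $O(\sqrt{\log k / n})$ bound for constant $\beta$.

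\textbf{Part (2).} Adaptivity breaks the argument above, because $\qquery_j$ may depend on the sample through earlier answers. I would restore it with a description-length argument.
\begin{enumerate}
\item Round each empirical answer to $b = O(\log n)$ bits. The rounding error is $2^{-b} \le 1/\sqrt{n}$, which is negligible.
\item Fix the analyst, which can be assumed deterministic by conditioning on its random coins. The whole transcript is then a deterministic function of the answer vector $(\answer_1,\dots,\answer_k)$, and that vector ranges over at most $2^{bk}$ values.
\item Each such value determines the fixed list of queries the analyst would issue. So the set $\mathcal{Q}$ of queries that could ever be asked has size at most $k\cdot 2^{bk}$, and it is fixed in advance, independent of the sample.
\item Apply Hoeffding's inequality and a union bound over $\mathcal{Q}$. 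With high probability, every $\qquery\in\mathcal{Q}$ has empirical error
\[
O\left(\sqrt{\log|\mathcal{Q}|/n}\right) = O\left(\sqrt{bk/n}\right).
\]
\item The queries actually asked lie in $\mathcal{Q}$, so the same bound holds for them.
\end{enumerate}
This argument does not use the round structure $\qrounds$ at all; it holds for fully adaptive sequences, which is consistent with the statement.

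\textbf{Main obstacle.} The difficulty is the precision term: the bound above is $O(\sqrt{k\log n/n})$, not $O(\sqrt{k/n})$. To remove the logarithm, I would first try a tighter rounding: round each answer to a grid of width $\alpha\approx\sqrt{k/n}$. That uses $O(\log(1/\alpha))$ bits per answer and still leaves a log factor. I would then try a chaining or peeling refinement. In it, answers are described by successively finer corrections, and the union bound is weighted by each answer's deviation scale, so that the total description cost stays $O(k)$.

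If that fails, I would fall back on the route in \cite{BassilyNSSSU16}: bound the mutual information between the sample and the transcript by $O(k)$ bits, then apply a transfer theorem from information to generalization. This gives $O(\sqrt{k/n})$ directly. Reconciling the constant-bit precision with that information bound is the step I expect to be delicate.
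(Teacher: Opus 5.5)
The paper quotes this theorem from the literature without a proof, so your proposal has to be judged on its own. Part (1) is the standard Hoeffding-plus-union-bound argument and is correct.

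Part (2), however, is not established. The argument you actually carry out gives $O(\sqrt{k\log n/n})$. As you note yourself, rounding to width $\delta=\sqrt{k/n}$ still leaves $O(\sqrt{k\log(n/k)/n})$, because a \emph{uniform} union bound over roughly $(2/\delta)^k$ transcripts charges every answer $\log(1/\delta)$ bits. The remedies you list are not carried out, and the fallback is asserted rather than argued. Bounding the mutual information by the entropy of the transcript reproduces exactly the same $k\log(1/\delta)$ term, so ``$O(k)$ bits'' requires the very idea that is missing. Transfer theorems from mutual information are also naturally in-expectation statements, so a high-probability form would need further work. Finally, the obstacle is not ``constant-bit precision'': you genuinely need $\Theta(\log(n/k))$ bits per answer. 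What must be $O(1)$ is the cost of each answer \emph{relative to its population value}.

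The missing idea is a non-uniform union bound whose weights are centered at $\qquery_j(\dist)$. Fix the analyst's coins and round to a grid of width $\delta=\sqrt{k/n}$; for $k>n$ the claim is trivial.
\begin{itemize}
\item A possible transcript $t=(\answer_1,\ldots,\answer_k)$ determines each query $\qquery^t_j$ from $\answer_1,\ldots,\answer_{j-1}$. Hence it determines centers $\mu^t_j=\qquery^t_j(\dist)$ that do not depend on $\sample$.
\item Let $\nu(t)=\prod_j Z_j^{-1}(1+|\answer_j-\mu^t_j|/\delta)^{-2}$, with each factor normalized over the grid, so $Z_j=O(1)$. By the chain rule, $\sum_t\nu(t)=1$.
\item Apply Hoeffding to each fixed pair $(t,j)$ at failure level $\beta\nu(t)/k$ and take the union bound. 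With probability $1-\beta$, simultaneously for all $t$ and $j$, $|\qquery^t_j(\sample)-\mu^t_j|\le\epsilon_t:=\sqrt{2\ln(2k/(\beta\nu(t)))/n}$.
\item For the realized transcript $T$ this gives $|\answer_j-\mu^T_j|\le\epsilon_T+\delta$. Hence $\ln(1/\nu(T))\le Ck+2k\ln(2+\epsilon_T/\delta)$ for an absolute constant $C$, and
\[
\frac{n\epsilon_T^2}{2}\le\ln\frac{2k}{\beta}+Ck+2k\ln\Bigl(2+\frac{\epsilon_T}{\delta}\Bigr).
\]
\item With $x=\epsilon_T/\delta$ this reads $x^2/2\le O(1)+\ln(1/\beta)/k+2\ln(2+x)$, which forces $x=O(1+\sqrt{\ln(1/\beta)/k})$.
\end{itemize}
Therefore $\max_j|\answer_j-\qquery_j(\dist)|\le\epsilon_T+\delta=O(\sqrt{k/n}+\sqrt{\ln(1/\beta)/n})$. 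This is $O(\sqrt{k/n})$ for every $\beta\ge e^{-k}$, which is the claimed bound.

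This is exactly the ``union bound weighted by deviation scale'' you gestured at. The two ingredients you did not supply are these. First, the weights may be centered at $\qquery_j(\dist)$; this is legitimate because it depends only on earlier answers and on $\dist$, not on the sample. Second, a self-consistency step is needed to close the loop between $\epsilon_T$ and $\nu(T)$.
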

In fact, these bounds are tight (up to constant factors) which means that even allowing one extra round of adaptivity leads to an exponential increase in the generalization error, from $\log k$ to $k$.

\citet{DworkFHPRR15} and \citet{BassilyNSSSU16} showed that by using carefully calibrated Gaussian noise in order to limit the dependency of a single query on the specific data instance, one
can actually achieve much stronger generalization error as a function of the number of queries, specifically.
\begin{thm}[\cite{DworkFHPRR15, BassilyNSSSU16}] \label{thm:gaussiannoise} For any distribution $\dist$, any $k$, any $\qrounds \geq 2$ and any \emph{$\qrounds$-round adaptive} statistical queries, if we answer queries with carefully calibrated Gaussian noise, with high probability,  we have:
\begin{center}
  $
\max_{j=1,\dots,k} | \answer_j - \qquery_j(\dist) | = O\left( \frac{\sqrt[4]{k}}{\sqrt{n}}  \right)
$
\end{center}
\end{thm}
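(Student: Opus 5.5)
The plan is the standard ``stability implies generalization'' route of \citet{DworkFHPRR15} and \citet{BassilyNSSSU16}. I split the error of each answer into two terms,
\[
|\answer_j - \qquery_j(\dist)| \le |\answer_j - \qquery_j(\sample_1,\dots,\sample_n)| + |\qquery_j(\sample_1,\dots,\sample_n) - \qquery_j(\dist)|,
\]
the \emph{sample accuracy} and the \emph{generalization gap}. The mechanism answers each query as $\answer_j = \qquery_j(\sample_1,\dots,\sample_n) + Z_j$ with $Z_j \sim N(0,\sigma^2)$. The empirical mean has sensitivity $2/n$, so each answer is $\rho$-zCDP with $\rho = 2/(n^2\sigma^2)$. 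By adaptive composition, the whole interaction of $k$ queries is $k\rho$-zCDP, and hence $(\varepsilon,\delta)$-differentially private with $\varepsilon \approx \sqrt{k\log(1/\delta)}/(n\sigma)$. This holds for any $\qrounds$: the bound only uses full adaptivity, which covers $\qrounds$-round adaptivity.

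For the sample-accuracy term I use a Gaussian tail bound and a union bound over the $k$ queries, giving $\max_j |Z_j| = O(\sigma\sqrt{\log k})$ with high probability. For the generalization gap I invoke the transfer theorem of \citet{BassilyNSSSU16}. If an $(\varepsilon,\delta)$-DP mechanism produces queries from the sample, then with high probability every query it outputs satisfies $|\qquery_j(\sample) - \qquery_j(\dist)| = O(\varepsilon)$, provided $\delta$ is small, say $\delta \le \varepsilon/k$ up to logs. The proof of that theorem goes through the ``monitor'' argument. One builds a DP algorithm that runs $T$ independent copies of the analyst–mechanism interaction and selects, with the exponential mechanism, the copy and query with the largest error. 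One then uses the expectation bound $\ex{}{\qquery(\sample) - \qquery(\dist)} \le e^\varepsilon - 1 + \delta$ for DP-selected queries. The amplification over $T \approx 1/\beta$ copies converts this expectation bound into a high-probability bound.

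Finally I balance the two terms. Total error is about $\sigma\sqrt{\log k} + \sqrt{k}/(n\sigma)$, ignoring $\log(1/\delta)$ factors. Choosing $\sigma \approx k^{1/4}/\sqrt{n}$ makes both terms $O(\sqrt[4]{k}/\sqrt{n})$ up to logarithmic factors, which the statement suppresses. This is the ``carefully calibrated'' noise.

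I expect the transfer theorem to be the main obstacle. Composition and the tail bounds are routine, but getting a high-probability rather than expected bound on the generalization gap of adaptively chosen queries needs the monitor construction. That construction in turn needs the exponential mechanism's error and the $\delta$ terms to be controlled carefully so that they do not dominate. I would follow the argument of \citet{BassilyNSSSU16} directly here, rather than the weaker max-information approach, which loses extra factors.
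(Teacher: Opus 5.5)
Your proposal is correct and matches the intended argument. The paper gives no proof of its own and simply imports the result from \citet{DworkFHPRR15} and \citet{BassilyNSSSU16}; your route (Gaussian noise with zCDP composition, a tail bound for sample accuracy, the monitor-based transfer theorem of \citet{BassilyNSSSU16} for the generalization gap, and the choice $\sigma \approx \sqrt[4]{k}/\sqrt{n}$) is exactly their argument, and you are right that the bound holds only up to logarithmic factors, which the paper's $O(\cdot)$ suppresses.
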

More interestingly, \citet{DworkFHPRR15}
also gave a refined bounds that can be achieved with different mechanisms depending on the number of rounds of adaptivity.   \begin{thm}[\cite{DworkFHPRR15}] \label{thm:gaussiannoise2} For any $r$ and $k$, there exists a mechanism such that for any distribution $\dist$, and any $\qrounds \geq 2$ any \emph{$\qrounds$-round adaptive} statistical queries, with high probability, it satisfies
\begin{center}
  $
\max_{j=1,\dots,k} | \answer_j - \qquery_j(\dist) | = O\left( \frac{r \sqrt{\log k}}{\sqrt{n}}  \right)
$
\end{center}
\end{thm}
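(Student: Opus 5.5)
The natural mechanism for this bound is round-wise data splitting combined with a non-adaptive guarantee inside each round, i.e.\ part 1 of Theorem~\ref{thm:nonadapt-adapt}. Concretely, given $r$ and $k$, the mechanism partitions the sample $\sample_1,\dots,\sample_n$ into $r$ disjoint chunks $S_1,\dots,S_r$, each of size $m = \lfloor n/r \rfloor$. Queries in round $i$ are answered with the empirical mean on $S_i$, so chunk $S_i$ is used only in round $i$. The plan is to show that each round behaves like a non-adaptive batch on fresh data, and then union bound over the $r$ rounds.

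\textbf{Step 1 (independence).} Fix round $i$. The queries $\qquery$ issued in round $i$ are a (possibly randomized) function of the answers from rounds $1,\dots,i-1$. Those answers depend only on $S_1,\dots,S_{i-1}$ and the analyst's internal randomness. Since the chunks are disjoint and the samples are i.i.d., $S_i$ is independent of the batch of round-$i$ queries. Conditioning on that batch, the round-$i$ queries are therefore non-adaptive with respect to an i.i.d.\ sample of size $m$ from $\dist$.

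\textbf{Step 2 (per-round bound).} Apply Theorem~\ref{thm:nonadapt-adapt}(1), or directly Hoeffding's inequality plus a union bound over at most $k$ queries. With failure probability at most $\delta/r$, every query answered in round $i$ has error $O\bigl(\sqrt{\log(kr/\delta)/m}\bigr)$.

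\textbf{Step 3 (combine rounds).} A union bound over the $r$ rounds gives, with probability $1-\delta$,
\[
\max_j |\answer_j-\qquery_j(\dist)| = O\!\left(\sqrt{\frac{r\log(kr)}{n}}\right).
\]
For $r\le k$ this is $O\bigl(\sqrt{r\log k/n}\bigr)$, which is at most $O\bigl(r\sqrt{\log k}/\sqrt n\bigr)$ because $\sqrt r \le r$. The stated bound follows.

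The main obstacle is Step 1: making the conditioning rigorous when the round structure is only defined through the analyst's behaviour. The mechanism must know which round each query belongs to. I would assume the $r$-round model provides this, with the analyst declaring round boundaries. Alternatively, since the mechanism is allowed to depend on $r$ and $k$, it can advance to a fresh chunk whenever the analyst submits a batch. With that in place, the independence argument is a standard deferred-decisions conditioning. The only other care needed is the degenerate case $r>k$; there one may simply use $k$ chunks, and the bound only improves.
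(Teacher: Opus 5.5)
The paper does not prove this statement. It is quoted from \cite{DworkFHPRR15} as motivation, so there is no in-paper argument to compare your proof with.

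\textbf{Correctness.} Taken on its own, your argument is correct. It also matches how the paper uses the result in practice: the paper recommends data splitting when the number of rounds is small, for example for $\kw{twoRounds}$. Your conditioning in Step~1 is sound. The round-$i$ batch is a function of $S_1,\dots,S_{i-1}$ and the analyst's coins, so conditionally on that batch $S_i$ is still distributed as $\dist^m$, and Hoeffding applies query by query. Because each chunk is discarded after its round, you also do not need the rounding caveat of Theorem~\ref{thm:nonadapt-adapt}(2).

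\textbf{You prove more than is stated.} What you actually establish is the stronger rate $O(\sqrt{r\log k/n})$. The final step $\sqrt r\le r$ throws away a factor of $\sqrt r$. So the stated bound is weaker than what plain data splitting already achieves.

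\textbf{A simplification.} You can streamline Steps~2--3 by charging failure probability $\delta/k$ to each query, rather than $\delta/r$ to each round. Conditionally on the history before its round, each query is a fixed function, independent of the chunk that answers it. A union bound over the $k$ queries then gives $O(\sqrt{r\log(k/\delta)/n})$ directly. This removes the $\log(kr)$ term and the separate case $r>k$.

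\textbf{The interaction model.} The assumption you mention in passing should be made an explicit hypothesis. The paper's informal definition only says the query sequence \emph{can be partitioned} into $\qrounds$ non-adaptive batches; it does not say the mechanism observes that partition. Without knowing the round boundaries, data splitting cannot be implemented. Your proof is therefore a proof in the standard batched model, where the analyst submits each round's queries together, and the theorem should be stated in that model.
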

Notice that Theorem~\ref{thm:gaussiannoise2} has different quantification in that the optimal choice of mechanism depends on the number of queries {and number of rounds of adaptivity}.  This suggests that if one knows a good \emph{a priori upper bound on the number of rounds of adaptivity}, one can choose the appropriate mechanism and get a much better guarantee in terms of the generalization error.
As an example, as we can see in Fig.~\ref{fig:generalization_errors}, if we know that an algorithm is 2-rounds adaptive, we can choose data splitting as {the} mechanism, while if we know that an algorithm has many rounds of adaptivity we can choose Gaussian noise. It is worth to stressing that by knowing the number of rounds of adaptivity one can also compute a concrete upper bound on the generalization error of a data analysis. This information allows one to have a quantitative, a priori, estimation of the effectiveness of a data analysis.
This motivates us to design a static program analysis aimed at giving good \emph{a priori} upper bounds on the number of rounds of adaptivity of a program.

{\small
\begin{figure}
\centering
\begin{subfigure}{.2\textwidth}
\begin{centering}
$
    \begin{array}{l}
    \kw{twoRounds(k)} \triangleq \\
           \clabel{ \assign{a}{0}}^{0} ;
            \clabel{\assign{j}{k} }^{1} ; \\
            \ewhile ~ \clabel{j > 0}^{2} ~ \edo ~ \Big( \\
             \clabel{\assign{x}{\query(\chi[j] \cdot \chi[k])} }^{3}  ; \\
             \clabel{\assign{j}{j-1}}^{4} ;\\
            \clabel{\assign{a}{x + a}}^{5}       \Big);\\
            \clabel{\assign{l}{\query(\chi[k]\cdot a)} }^{6}\\
        \end{array}
$
\caption{}
\end{centering}
\end{subfigure}
\begin{subfigure}{.36\textwidth}
\qquad
\begin{centering}
\begin{tikzpicture}[scale=\textwidth/16.5cm,samples=200]
\draw[] (0, 10) circle (0pt) node
{{ $a^0: {}^{ \tau \to 1}_{0}$}};
\draw[] (0, 7) circle (0pt) node
{\textbf{$x^3: {}^{\tau \to k}_{1}$}};
\draw[] (0, 4) circle (0pt) node {{ $a^5: {}^{\tau \to k}_{0}$}};
\draw[] (0, 1) circle (0pt) node
{{ $l^6: {}^{\tau \to  1}_{1}$}};
\draw[] (8, 9) circle (0pt) node {\textbf{$j^1: {}^{\tau \to 1}_{0}$}};
\draw[] (8, 6) circle (0pt) node {{ $j^4: {}^{ \tau \to k}_{0}$}};
\draw[ ultra thick, -latex, densely dotted,] (0, 1.5)  -- (0, 3.5) ;
\draw[ ultra thick, -latex, densely dotted,] (0, 4.5)  -- (0, 6.5) ;
\draw[ -latex] (0, 4.5)  to  [out=-230,in=230]  (0, 9.5) ;
\draw[ -Straight Barb] (1.5, 3.8) arc (120:-200:1);
\draw[ -Straight Barb] (9, 6.5) arc (150:-150:1);
\draw[ -latex] (8, 6.5)  -- (8, 8.5) ;
\draw[ -latex] (0, 1.5)  to  [out=-230,in=230]  (0, 9.5) ;
\draw[ -latex] (2, 7)  -- (6, 9) ;
\draw[ -latex] (2, 7)  -- (5.5, 6) ;
\draw[ -latex] (2, 4.5)  -- (6, 9) ;
\draw[ -latex] (2, 4.5)  -- (5.5, 6) ;
\draw[ -latex] (2, 1)  -- (6, 9) ;
\draw[ -latex] (2, 1)  -- (5.5, 6) ;
\draw[ -latex] (0, 1.5)  to  [out=50,in=-50]  (0, 6.5) ;
\end{tikzpicture}
\caption{}
\end{centering}
\end{subfigure}
   \begin{subfigure}{.36\textwidth}
   \begin{centering}
   \begin{tikzpicture}[scale=\textwidth/18cm,samples=200]
\draw[] (0, 10) circle (0pt) node
{{ $a^0: {}^1_{0}$}};
\draw[] (0, 7) circle (0pt) node
{\textbf{$x^3: {}^{k}_{1}$}};
\draw[] (0, 4) circle (0pt) node
{{ $a^5: {}^{k}_{0}$}};
\draw[] (0, 1) circle (0pt) node
{{ $l^6: {}^{1}_{1}$}};
\draw[] (6, 9) circle (0pt) node {\textbf{$j^1: {}^{1}_{0}$}};
\draw[] (6, 6) circle (0pt) node {{ $j^4: {}^{k}_{0}$}};
\draw[ ultra thick, -latex, densely dotted,] (0, 1.5)  -- (0, 3.5) ;
\draw[ ultra thick, -latex, densely dotted,] (0, 4.5)  --
(0, 6.5) ;
\draw[  -latex] (0, 4.5)  to  [out=-230,in=230]
(0, 9.5) ;
\draw[  -Straight Barb] (1.5, 3.5) arc (120:-200:1);
\draw[  -Straight Barb] (6.5, 6.5) arc (150:-150:1);
\draw[  -latex] (6, 6.5)  -- (6, 8.5) ;
\draw[ -latex] (1.5, 7)  -- (5, 9) ;
\draw[ -latex] (1.5, 4)  -- (5, 9) ;
\draw[ -latex] (1.5, 7)  -- (5, 6) ;
\draw[ -latex] (1.5, 4)  -- (5, 6) ;
\draw[  -latex] (0, 1.5)  to  [out=-230,in=230]  (0, 9.5) ;
\draw[ -latex] (2, 1)  -- (5, 9) ;
\draw[ -latex] (2, 1)  -- (5, 6) ;
\draw[ -latex] (0, 1.5)  to  [out=50,in=-50]  (0, 6.5) ;
\end{tikzpicture}
\caption{}
   \end{centering}
   \end{subfigure}
\vspace{-0.4cm}
 \caption{(a) The program $\kw{twoRounds(k)}$, an example
with two rounds of adaptivity (b) The corresponding semantics-based dependency graph (c) The estimated dependency graph from $\THESYSTEM$.
}
\label{fig:overview-example}
\vspace{-0.5cm}
\end{figure}
}

\subsection{ {\THESYSTEM} formally through an example.}
We illustrate the key technical components of our framework through a simple adaptive data analysis with two rounds of adaptivity.
In this analysis, an analyst asks $k+1$ queries to a mechanism in two phases.
In the first phase, the analyst asks $k$ queries and stores the answers that are provided by the mechanism. In the second phase, the analyst constructs a new query based on the results of the previous $k$ queries and sends this query to the mechanism.
The mechanism is abstract here and our goal is to use static analysis to provide an upper bound on adaptivity to help choose the mechanism.
This data analysis assumes that the data domain $\univ$
contains at least $k$ numeric attributes
(every query in the first phase focuses on one), which we index just by natural numbers.
The implementation of this data analysis in the language of {\THESYSTEM} is presented in Fig.~\ref{fig:overview-example}(a).

The {\THESYSTEM} language extends a standard while language\footnote{Programs components are labeled, so that we can uniquely identify every component.} with a query request constructor denoted $\query$.
 Queries have the form $\query(\qexpr)$, where $\qexpr$ is a special expression (see syntax in Section~\ref{sec:loop_language})
representing a function $\from \univ \to U$ on rows of an hidden database that is only accessible through the mechanisms. The domain $\univ$ of this function is the (arbitrary) domain of rows of the database. The codomain $U$ of this function is the query output space which, depending on the specific program, could be $[-1,1]$, $[0,1]$ or $[-R,+R]$, for some $R$. We use this formalization because we are interested in linear queries which, as we discussed in the previous section, compute the empirical mean of functions on rows.
 As an example, $x \leftarrow \query(\chi[j] \cdot \chi[k])$ computes an approximation, according to the used mechanism, of the empirical mean of the product of the $j^{th}$ attribute and $k^{th}$ attribute, identified by $\chi[j] \cdot \chi[k]$. Notice that we don't materialize the mechanism but we assume that it is implicitly run when we execute the query.
 In Fig.~\ref{fig:overview-example}(a), the queries inside the while loop correspond to the first phase of the data analysis and compute the sum of the empirical mean of
the product of the $j$th attribute with the $k$th attribute.
The query outside the loop corresponds to the second phase and computes an approximation of the empirical mean of the last attribute weighted by the sum of the empirical mean of the first $k$ attributes.

This example is intuitively 2-rounds adaptive since we have two clearly distinguished phases, and the queries that we ask in the first phase do not depend on each other (the query $\chi[j] \cdot \chi[k]$ at line $3$ only relies on the counter $j$ and input $k$), while the last query
(at line 6) depends on the results of all the previous queries.
However, capturing this concept formally is surprisingly challenging. The difficulty comes from the quantitative nature of this concept and how this quantitative nature interacts with data and control dependency. We describe how we capture it next.

\subsubsection{Adaptivity definition}
\label{sec:adaptivity-informal}

The central property we are after in this work is the \emph{adaptivity of a program}. We define formally this notion in three steps (details in Section~\ref{sec:adaptivity}). First, we define a notion of dependency, or better \emph{may-dependency}, between variables. To do this we take inspiration from previous works on dependency analysis and information flow control and we say that a variable \emph{may depend} on another one if changing the execution of the latter can affect the execution of the former.
We can see in Fig.~\ref{fig:overview-example}(a) that the value of the variable $l$, which corresponds to the result of the execution of the query in the second phase (in the command with label 6), is affected by the value of the variable $x$, which corresponds to the result of the execution of the query at line 3 in the first phase, via the variable $a$.
To formally define this notion of dependency, as in information flow control, we use the execution history of programs recorded by a trace semantics (see Definition~\ref{def:var_dep}).

Second, we build an annotated weighted directed graph representing the possible dependencies between labeled variables. We call this graph the \emph{semantics-based dependency graph} to stress that this graph summarizes the dependencies we could see if we knew the overall behavior of the program.
The vertices of the graph are the assigned program variables with the label of their assignments, edges are pairs of labeled variables which satisfy the dependency relations, weights are functions associated with vertices and describe the number of times the assignment corresponding to the vertex is executed when the program is run in a given starting state\footnote{In our trace semantics the state is recorded in the trace, so an initial state is actually represented by an initial trace. We will use this terminology in later sections.}, and the annotations, which we call \emph{query annotations}, are bits associated with vertices and describe if the corresponding assignment comes from a query (1) or not (0).
The \emph{semantics-based dependency graph} of the $\kw{twoRounds(k)}$ program
we gave in Fig.~\ref{fig:overview-example}(a) is described in Fig.~\ref{fig:overview-example}(b) (we use dashed arrows for two edges that will be highlighted in the next step, for the moment these can be considered similar to the other edges---i.e. solid arrows).

We have all the variables that are assigned in the program with their labels, and edges representing dependency relations between them.
For example, we have two edges $(l^6, a^5)$ and $(a^5, x^3)$ describing the dependency between the variables assigned by queries. The vertices $l^6$ and $x^3$ are the only ones with query annotation $1$ (the subscript), since they are the only two variables that are in assignments involving  queries. Notice that the graph contains cycles---in this example it contains two self-loops. These cycles capture the fact that the variables $a^5$ and $j^4$ are updated at every iteration of the loop using their previous values. Cycles are essential to capture mutual dependencies like the ones that are generated in loops. Adaptivity is a quantitative notion, so capturing this form of dependencies is not enough.
This is why we also use weights.
\highlight{The weight of a vertex is a function that given an initial state returns a natural number representing
the number of times this vertex is visited during the program execution starting in this initial state, which relies on
the actual value of the input variables defined in the initial trace.
For instance, in the running example in Fig.~\ref{fig:overview-example}(b), the times of the vertex $a^5$ will be visited depends on actual value of
the input variable $k$. To capture this, we define a function
$\kw{lastVal}(\tau, x)$ that will return the latest value of variable $x$ in any initial trace $\trace$. Also,
it is natural that some vertex will be visited only in constant times, such as the vertex $l^{6}$ being executed only once.
 To express this, our $\kw{lastVal}$ function also accepts constant numbers besides input variables, such that we use $\kw{lastVal}(\tau, 1)$ as the weight for  the vertex $l^{6}$.
 In Fig.~\ref{fig:overview-example}(b), we use  $ \tau \to x$ and $\tau \to 1$ as the short notations for
 $\kw{lastVal}(\tau, x)$ and $\kw{lastVal}(\tau, 1)$.
}

Third, we can finally define adaptivity using the semantics-based dependency graph. We actually define this notion with respect to an initial state $\tau$, since different states can give very different adaptivities.
We consider
any  walk  that visits any vertex $v$ of the semantics-based dependency graph no more than the value specified by the vertex's weight $w_v$ and the initial state $\tau$, and that visits a maximal number of query vertices.
The number of query vertices visited is the adaptivity of the program with respect to $\tau$.
In Fig.~\ref{fig:overview-example}(b), assuming that $\tau(k) \geq 1$, we can see that the
walk along the dashed arrows,  $l^{6} \to a^5 \to x^3 $ has two vertices with query annotation $1$, and we cannot find another walk having more than $2$ query vertices, although there is another walk, $l^{6} \to x^3 $, which has $2$ query vertices. So the adaptivity of the program in Fig.~\ref{fig:overview-example}(a) with respect to $\tau$ is $2$. If we consider an initial state $\tau$ such that $\tau(k)=0$ we have that the adaptivity with respect to $\tau$ is instead $1$.

\subsubsection{Static analysis}

To compute statically a sound and accurate upper bound on the \emph{adaptivity} of a program $c$,
we design a program analysis framework named {\THESYSTEM} (formally in Section \ref{sec:algorithm}).
The structure of {\THESYSTEM} (Fig.~\ref{fig:adaptfun}) reflects in part the definition of adaptivity we discussed. Specifically, {\THESYSTEM} is composed by two algorithms (the ones in dashed boxes in the figure), one for building a dependency graph, called \emph{estimated dependency graph}, and the other to estimate the adaptivity from this graph.
The first algorithm generates the \emph{estimated dependency graph} using several program analysis techniques. Specifically,
 {\THESYSTEM} extracts the vertices and the query annotations by looking at the assigned variables of the program, it estimates the edges by using control flow and data flow analysis, and it estimates the weights by using symbolic reachability-bound analysis---weights in this graph are symbolic expressions over input variables.
The second algorithm estimates the
walk which respects the weights and which visits the maximal number of query vertices.
The two algorithms together gives us an  upper bound on the program's \emph{adaptivity}.

 \begin{figure}
  \centering
\includegraphics[width=1.0\columnwidth]{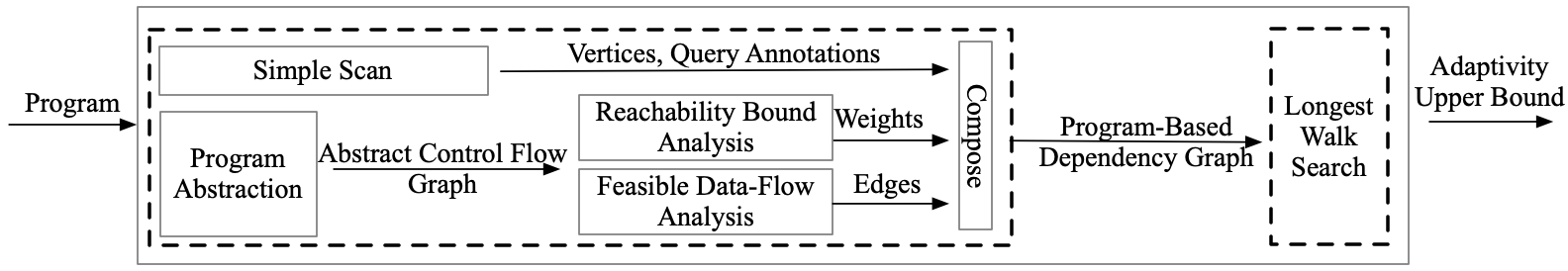}
  \vspace{-0.8cm}
  \caption{The overview of {\THESYSTEM}}
  \label{fig:adaptfun}
  \vspace{-0.5cm}
\end{figure}

We show in Fig.~\ref{fig:overview-example}(c) the estimated dependency graph that our static analysis algorithm returns for the program $\kw{twoRounds(k)}$ in Fig.~\ref{fig:overview-example}(a).
Vertices and query annotations are the same as the ones in Fig.~\ref{fig:overview-example}(b), simply inferred by scanning the program.
 Every edge in Fig.~\ref{fig:overview-example}(b) is precisely inferred by our combined data flow and control flow analysis, this is why Fig.~\ref{fig:overview-example}(c) contains exactly the same edges.
The weight of every vertex is computed using a reachability-bound estimation algorithm which outputs a symbolic expression over the input variables, representing an upper bound on the number of times each assignment is executed.
For example, consider the vertex $x^{3}$, its weight is $k$ and this provides an upper bound on the value returned by the weight function \highlight{$\kw{lastVal}(\trace,k)$} associated with vertex $x^{3}$ in Fig.~\ref{fig:overview-example}(b) for any initial state.

The algorithm searching for the walk first finds a path $l^6:{}^1_1 \to a^5: {}^k_0 \to x^3: {}^k_1$, and then constructs a walk based on this path. Every vertex on this walk is visited once, and the number of vertices with query annotation $1$ in this walk is $2$, which is the upper bound we expect.
{It is worth noting here that $x^3$ and $a^5$ can only be visited once because there isn't an edge to go back to them, even though they both have the weight $k$}.  So the algorithm $\pathsearch$ computes the upper bound $2$ instead of $2k+1$. Note that $2$ is not always tight, for example when $k = 0$.

\section{Labeled Query  While  Language}
\label{sec:loop_language}
{The language of {\THESYSTEM} is a standard while language with labels to identify different components and with primitives for queries, and equipped with a  trace-based operational semantics which is the main technical tool we will use to define the program's adaptivity.

\vspace{-0.1cm}
{\small
\[
\begin{array}{llll}
\mbox{Arithmetic Expression}
& \aexpr & ::= &
n ~|~ {x} ~|~ \aexpr \oplus_a \aexpr
~|~ \elog \aexpr  ~|~ \esign \aexpr ~|~ \max(a, a) ~|~ \min(a, a)
\\
\mbox{Boolean Expression} & \bexpr & ::= &
\etrue ~|~ \efalse  ~|~ \neg \bexpr
 ~|~ \bexpr \oplus_b \bexpr
~|~ \aexpr \sim \aexpr
\\
\mbox{Expression} & \expr & ::= & v ~|~ \aexpr \sep \bexpr ~|~ [\expr, \dots, \expr]
\\
\mbox{Value}
& v & ::= & { n \sep \etrue \sep \efalse ~|~ [] ~|~ [v, \dots, v]}
\\
\mbox{Query Expression}
& {\qexpr} & ::=
& { \qval ~|~ \aexpr ~|~ \qexpr \oplus_a \qexpr ~|~ \chi[\aexpr]}
\\
\mbox{Query Value} & \qval & ::=
& {n ~|~ \chi[n] ~|~ \qval \oplus_a  \qval ~|~ n \oplus_a  \chi[n]
    ~|~ \chi[n] \oplus_a  n}\\
\mbox{Label}
& l & \in & \mathbb{N} \cup \{\lin, \lex\} \\
\mbox{Labeled Command}
& {c} & ::= &   [\assign {{x}}{ {\expr}}]^{l} ~|~  [\assign {{x} } {{\query(\qexpr)}}]^{l}
~|~ {\ewhile [ \bexpr ]^{l} \edo {c} }
 \\
 &&&
~|~ {c};{c}
~|~ \eif([\bexpr]{}^l , {c}, {c})
~|~ [\eskip]^l
\end{array}
\]
\vspace{-0.1cm}
}

Expressions include
standard arithmetic (with value $n \in \mathbb{N}\cup \{ \infty \}$) and boolean expression, ($\aexpr$ and $\bexpr$) and extended query expressions $\qexpr$.
A query expression $\qexpr$ can be either a simple arithmetic expression $a$, an expression of the form $\chi[\aexpr]$ where $\chi$ represents a row of the database  and  $\aexpr$ represents an index used to identify a specific attribute of the row $\chi$, a combination of two query expressions, $\qexpr \oplus_{a} \qexpr$, or a normal form $\qval$.
For example, the query expression $\chi[3] + 5$  denotes the computation that
obtains
the value in the $3$rd column of $\chi$ in one row and then adds $5$ to it.

Commands are the typical ones from while languages with an additional command $\assign{x}{\query(\qexpr)}$ for query requests which can be used to interrogate
 the database and compute the linear query corresponding to $\qexpr$.
Each command is annotated with a label $l$, and we will use natural numbers as labels to record
the location of each command, so that we can uniquely identify them.
We also have a set of labels $\ldom$, a set $\mathcal{LV}$  of labeled variables (simply variables with a label), and a set $\cdom$ of all the programs.
We denote by $\mathbb{LV}(c)$ the set of labeled variables assigned in an assignment command in the program $c$.
We denote by  $\qvar(c)$ the set of labeled variables that are assigned the result of a query in the program $c$.
 \highlight{We provide the table of notations in Table~\ref{tb:notation} for quick reference.}

\subsection{ Trace-based Operational Semantics}

We use a trace-based operational semantics tracking the history of program execution. The operational semantics is parameterized by a database that can be accessed only through queries. Since this database is fixed, we omit it from the semantics but it is important to keep in mind that this database exists and it is what allows us to evaluate queries.
A \emph{trace}
$\trace$ is a list of \emph{events} generated when executing specific commands. We denote by $\mathcal{T}$ the set of traces and we will use list notation for traces,
 where $[]$ is the empty trace, the operator $\traceadd$ combines an event and a trace in a new trace,
and the operator $\tracecat$ concatenates two traces.
We have two kinds of events: \emph{assignment events} and \emph{testing events}.
Each event consists of a quadruple,
and we use $\eventset^{\asn}$ and $\eventset^{\test}$ to denote the set of all assignment events and testing events, respectively.
\begin{center}
  $ \begin{array}{lllll}
  \mbox{Event}
  & \event & ::= &
  ({x}, l, v, \bullet) ~|~ ({x}, l, v, \qval)  & \mbox{Assignment Event} \\
  &&& ~|~(\bexpr, l, v, \bullet)  & \mbox{Testing Event}
  \\
  \end{array}$
  \end{center}
An assignment event tracks the execution of an assignment  or a query request and consists of the assigned variable, the label of the command that generates it, the value assigned to the variable, and the normal form  $\qval$ of the query expression that has been requested, if this command is a query request, otherwise a default value $\bullet$.
A testing event tracks the execution of an if or while command and consists of the guard of the command, the label, the result of evaluating the guard, the last element $\bullet$.
 We use the operator $\env (\trace) x$ to fetch the latest value assigned to  $x$ in the trace $\trace$, the operator
$\vcounter$ to count the occurrence of a labeled variable in the trace. \highlight{ The function $\kw{lastVal}(\tau, x)$ mentioned in Section~\ref{sec:adaptivity-informal} can be
expressed as $\lambda \trace. \env (\trace) x$. For any initial trace $\trace$, $\kw{lastVal}(\tau, x)$ returns the latest value of $x$ in $\trace$.
}
We denote by $\tlabel(\trace) \subseteq \ldom$ the set of the labels occurring in $\trace$.
Finally, we use $\mathcal{T}_0(c) \subseteq \mathcal{T}$ to denote the set of \emph{initial traces}, the ones
which assign a value to the input variables. We use $\eventset$ to denote the set of all events.

The trace-based operational semantics is described in terms of a small step evaluation relation
$\config{c, \trace} \to \config{c', \trace}'$  describing how a configuration program-trace evaluates to another
configuration program-state. The rules for the operational semantics are described in Fig.~\ref{fig:os}.
The rules for assignment and query generate assignment events, while the rules for while and if generate testing events.
The rules for the standard while language constructs correspond to the usual rules extended to deal with traces.
We have relations $\config{\trace, \expr} \earrow v $  and $\config{\trace, \bexpr} \barrow v $  to evaluate expressions and boolean expressions, respectively. Their definitions are in the supplementary material.
The only rule that is non-standard is the $\textbf{query}$ rule. When evaluating a query, the query expression $\qexpr$ is first simplified to its normal form $\alpha$ using an evaluation relation $\config{\trace, \qexpr} \qarrow \qval$.
Then normal form $\qval$ characterizes the linear query that is run against the database. The query result $v$ is the expected value of the function $\lambda \chi.\qval$ applied to each row of the dataset. We summarize this process with the notation $\query(\qval) = v$ in the rule $\textbf{query}$.
Once the answer of the query is computed, the rules record all the needed information in the trace.  We will use $\to^*$ for the reflexive and transitive closure of $\to$.
The query expression evaluation relation  $\config{\trace, \qexpr} \qarrow \qval$ is defined by the following rules which reduce a query expression to its normal form.
{\small
\begin{mathpar}
\inferrule{
  \config{\trace, \aexpr} \aarrow n
}{
 \config{\trace,  \aexpr}
 \qarrow n
}
\and
\inferrule{
  \config{\trace, \qexpr_1} \qarrow \qval_1
  \and
  \config{\trace, \qexpr_2} \qarrow \qval_2
}{
 \config{\trace,  \qexpr_1 \oplus_a \qexpr_2}
 \qarrow \qval_1 \oplus_a \qval_2
}
\and
\inferrule{
  \config{\trace, \aexpr} \aarrow n
}{
 \config{\trace, \chi[\aexpr]} \qarrow \chi[n]
}
\and
\inferrule{
  \empty
}{
 \config{\trace,  \qval}
 \qarrow \qval
}
 \end{mathpar}
 }

{\footnotesize %
\begin{figure}
\begin{mathpar}
\boxed{
\mbox{Command $\times$ Trace}
\xrightarrow{}
\mbox{Command $\times$ Trace}
}
\and
\boxed{\config{{c, \trace}}
\xrightarrow{}
\config{{c',  \trace'}}
}
\\
{
\inferrule
{
 \trace, \qexpr \qarrow \qval
 \and
\query(\qval) = v
\and
\event = ({x}, l, v, \qval)
}
{
\config{{[\assign{x}{\query(\qexpr)}]^l, \trace}}
\xrightarrow{}
\config{{\clabel{\eskip}^l,  \trace \traceadd \event} }
}
~\textbf{query}
}
\and
\inferrule
{
 \trace, b \barrow \etrue
 \and
 \event = (b, l, \etrue, \bullet)
}
{
\config{{\ewhile [b]^{l} \edo c, \trace}}
\xrightarrow{}
\config{{
c; \ewhile [b]^{l} \edo c,
\trace \traceadd \event}}
}
~\textbf{while-t}
\end{mathpar}
  \vspace{-0.5cm}
    \caption{Trace-based Operational Semantics for Language.}
    \label{fig:os}
  \vspace{-0.1cm}
\end{figure}
}

    \begin{table}
      \caption{\highlight{Table of Notations}}
      \vspace{-0.5cm}
      \label{tb:notation}
      \begin{center}
        \begin{tabular}{| c |c |c| c| }
          \hline
          $\mathcal{LV}$   & universe of labeled variables  & $\qvar(c)$ & labeled query variables in $c$\\
          $\cdom$  & set of all programs &  $\trace$ &   trace, a list of \emph{events}\\
          $\mathcal{T}$  &  set of traces &  $\trace \traceadd \event$  & combine a trace and an event  \\
          $\ldom$ & set of labels  & $\trace \tracecat \trace'$ &  trace concatenation \\
          $\tlabel(\trace) $  &set of labels occurring in $\trace$  &  $ \env (\trace) x$  & fetch latest value of  $x$ in a given $\trace$ \\
          $\mathcal{T}_0(c) $ &  set of \emph{initial traces} & $\kw{lastVal} (\trace, x)$  & fetch latest value of  $x$ in any $\trace$\\
          $\mathbb{LV}(c)$  & labeled variables in $c$ & $\vcounter(\trace, x^i)$ & occurrence of $x^i$ in the trace $\trace$\\
          \hline
        \end{tabular}
        \end{center}
        \vspace{-0.5cm}
      \end{table}
}
\section{Definition of Adaptivity}
\label{sec:adaptivity}
 In this section, we formally present the definition of adaptivity for a given program. We first define a dependency relation between program variables, then define a semantics-based dependency graph, and finally look at the walk visiting the maximal number of query vertices in this graph.

\subsection{May-dependency between variables}
\label{sec:dep}
We are interested in defining a notion of dependency between program variables since assigned variables are a good proxy to study dependencies between queries---we can recover query requests from variables associated with queries. We consider dependencies that can be generated by either data or control flow.
For example, in the program
\[c_1 =[\assign{x}{\query(\chi[2])}]^1 ;[\assign{y}{\query(\chi[3] + x)}]^2\]
the query $\query(\chi[3] + x)$  depends on the query $\query(\chi[2]))$ through a \emph{value dependency} via  $x^1$.
Conversely, in the program
\[c_2 = [\assign{x}{\query(\chi[1])}]^1 ; \eif( [x > 2]^2 , [\assign{y}{\query(\chi[2])}]^3, [\eskip]^4 )\]
the query $\query(\chi[2])$  depends on the query $\query(\chi[1])$ via the \emph{control dependency} of the guard of the if command involving the labeled variable $x^1$.

To define dependency between program variables we will consider two events that are generated from the same command, hence they have the same variable name or boolean expression and label, but have either different values or different query expressions.

\begin{defn}
\label{def:diff}
\highlight{Two colocated events $\event_1, \event_2 $ differ in their values,  if they are either,
\begin{enumerate}
  \item  of the form $\event_1 = (x, l, v_1, \bullet)$ and $event_1 = (x, l, v_2, \bullet)$, for a common variable (or Boolean expression) $x$ and program location $l$, and $v_1 \neq v_2$.
  \item  or of the form $\event_1 = (x, l, v_1, q_1)$ and $\event_1 = (x, l, v_2, q_2)$, where $q_1 \neq \bullet, q_2 \neq \bullet,$ and $q_1 \neq_q q2$.
\end{enumerate}
denoted as $\diff(\event_1, \event_2)$ as follows:}
{\small
\begin{subequations}
\begin{align}
& \pi_1(\event_1) = \pi_1(\event_2)
  \land
  \pi_2(\event_1) = \pi_2(\event_2) \\
& \land
  \big(
   (\pi_3(\event_1) \neq \pi_3(\event_2)
  \land
  \pi_{4}(\event_1) = \pi_{4}(\event_2) = \bullet )
  \lor
  (\pi_4(\event_1) \neq \bullet
  \land
  \pi_4(\event_2) \neq \bullet
  \land
  \pi_{4}(\event_1) \neq_q \pi_{4}(\event_2))
  \big)
\end{align}
\label{eq:diff}
\end{subequations}
}
where $\qexpr_1 =_{q} \qexpr_2$ denotes the semantics equivalence between query values\footnote{The formal definition is in the supplementary material},
and $\pi_i$ projects the $i$-th element from the quadruple of an event.
\end{defn}

\highlight{In the above definiton, we use $\neq_q$ to show the inequality between two query expressions.}
\highlight{For instance, in the running program in Fig.~\ref{fig:overview-example}(a), the query request command at line $3$, $\clabel{\assign{x}{\query(\chi[j] \cdot \chi[k])} }^{3}$
generates two events $\event_1 = (x, 3, 0, \chi[0] \cdot \chi[1])$ and $\event_2 = (x, 3, 0, \chi[0] \cdot \chi[2])$
given different inputs $k = 1$ and $k = 2$. The two query expressions $\chi[0] \cdot \chi[1]$ and $\chi[0] \cdot \chi[2]$ represent two
varied query requests and can not be regarded as identical, so we have  $\chi[0] \cdot \chi[1] \neq_q \chi[0] \cdot \chi[2]$.
}
Even though the two events have the same query results($\pi_3(\event_1) = \pi_3(\event_2)$), they are still different by our definition.

We can now define when an event \emph{may depend} on another one in Definition~\ref{def:event_dep}\footnote{We consider here dependencies between assignment events. This simplifies the definition and is enough for the stating the following definitions. The full definition is in the supplementary material.}.

There are several components in Definition~\ref{def:event_dep}. The part with label (2a) requires that $\event_1$ and $\event_1'$ differ in their values ($\diff(\event_1, \event_1')$).
The next two parts (2b) and (2c) capture the value dependency and control dependency, respectively.
As in the literature on non-interference, and following~\cite{Cousot19a}, we formulate these dependencies as relational properties, in terms of two different traces of execution.
We force these two traces to differ by using the event $\event_1$ in one and $\event_1'$ in the other.
For the value dependency we check whether the change also creates a change in the value of $\event_2$ or not. We additionally check that the two events we consider appear the same number of times in the two traces - this to make sure that if the events are generated by assignments in a loop, we consider the same iteration.
For the control dependency we check whether the change of the value in the assignment of event $\event_1$ affects the occurrence of event $\event_2$ or not.
For this we require the presence of a test event whose value is affected by the change in $\event_1$
in order to guarantee that the computation goes through a control flow guard.
Similarly to the previous condition, we additionally check that the two test events we consider appear the same number of times in the two traces.

\begin{defn}[Event May-Dependency]
  \label{def:event_dep}
  \vspace{-0.1cm}
  \highlight{An assignment event $\event_2\in \eventset^{\asn}$ \emph{may-depend} on another assignment event $\event_1 \in \eventset^{\asn}$ in a program ${c}$ \highlight{with a witness trace $\trace$}
  , if and only if there exists an assignment event $\event_1'$ which colocate $\event_1$ but differs in values, either
  \begin{enumerate}
    \item replacing $\event_1$ with $\event_1'$ in the execution trace will trigger a new execution trace whose newly generated event $\event_2'$ that colocateds $\event_2$ differs in vaue with $\event_2$, which corresponds to capturing the data dependency.
    \item replacing $\event_1$ with $\event_1'$ in the execution trace will trigger a new execution trace in which a test event after $\event_1$ in the original trace will be flipped , which corresponds to capturing the control dependency.
  \end{enumerate}
  It is denoted
  $\eventdep(\event_1, \event_2, \highlight{\trace},  c)$}
  \begin{subequations}
  {\small
  \vspace{-0.1cm}
  \begin{align}
  &
  \exists \trace, \trace_0, \trace_1, \trace' \in \mathcal{T},\event_1' \in \eventset^{\asn}, {c}_1, {c}_2  \in \cdom  \st \diff(\event_1, \event_1') \land \\
  &
  \quad \exists  \event_2' \in \eventset \st
  \left(
  \begin{array}{ll}
    & \config{{c}, \trace_0} \rightarrow^{*}
    \config{{c}_1, \trace_1 \tracecat [\event_1]}  \rightarrow^{*}
    \config{{c}_2,  \trace_1 \tracecat [\event_1] \tracecat \trace \tracecat [\event_2] }
     \\
     \bigwedge &
     \config{{c}_1, \trace_1 \tracecat [\event_1']}  \rightarrow^{*}
      \config{{c}_2,  \trace_1 \tracecat[ \event_1'] \tracecat \trace' \tracecat [\event_2'] }
    \\
    \bigwedge &
    \diff(\event_2,\event_2' ) \land
    \vcounter(\trace, \pi_2(\event_2))
    =
    \vcounter(\trace', \pi_2(\event_2'))\\
    \end{array}
    \right)\\
    &
    \quad
    \lor
    \left(
    \begin{array}{l}
    \exists \trace_3, \trace_3'  \in \mathcal{T}, \event_b \in \eventset^{\test} \st
    \\
     \quad \config{{c}, \trace_0} \rightarrow^{*} \config{{c}_1, \trace_1 \tracecat [\event_1]}  \rightarrow^{*}
     \config{c_2,  \trace_1 \tracecat [\event_1] \tracecat
     \trace \tracecat [\event_b] \tracecat  \trace_3}
  \\ \quad \land
  \config{{c}_1, \trace_1 \tracecat [\event_1']}  \rightarrow^{*}
  \config{c_2,  \trace_1 \tracecat [\event_1'] \tracecat \trace' \tracecat [(\neg \event_b)] \tracecat \trace_3'}
  \\
  \quad \land \tlabel({\trace_3}) \cap \tlabel({\trace_3'})
  = \emptyset
  \land \vcounter(\trace', \pi_2(\event_b)) = \vcounter(\trace, \pi_2(\event_b))
      \land \event_2 \in \trace_3
      \land \event_2 \not\in \trace_3'
    \end{array}
    \right)
  \end{align}
  }
  \label{eq:eventdep}
  \end{subequations}
  \vspace{-0.3cm}
  \end{defn}

In the running example in Fig.~\ref{fig:overview-example}(b), given the initial trace $[ (k, \lin, 1, \bullet)]$,
the program execution generates an
event $\event_1 = (x, 3, v_1, \chi[0] \cdot \chi[1])$ for
$\clabel{\assign{x}{\query(\chi[j] \cdot \chi[k])} }^{3}$,
and another event
$\event_2 = (l, 6, v_2, \chi[1] \cdot v_1 )$ for
$\clabel{\assign{l}{\query(\chi[k] \cdot a)} }^{6}$.
To check a may-dependency, we replace $\event_1$ with another event $\event_1' = (x, 3, v_1', \chi[0] \cdot \chi[1])$
where $v_1 \neq v_1'$.
We then continue to execute the program. The $6^{th}$ command generates $\event_2 = (l, 3, v_2, \chi[1] \cdot v_1' )$.
Since $\pi_4(\event_2) \neq_1 \pi_4(\event_2')$, we have  $\diff(\event_2, \event_2')$ according to Definition~\ref{def:diff} and then the dependency relation $\eventdep(\event_1, \event_2, \kw{twoRounds(k)})$.

 The dependency relation to variables can be extended by considering all the assignment events generated during the program's execution. Notice two variables can be the same in the following definition, this allows us to capture self-dependencies.
 \begin{defn}[Variable May-Dependency]
  \label{def:var_dep}
  \vspace{-0.2cm}
 \highlight{ A variable ${x}_2$ assigned at location $l_2$ of a program $c$  \emph{may-depend} on the
  variable ${x}_1$ assigned at location $l_1$ in the program ${c}$,
  if and only if there exist two assignment events $\event_1$ and $\event_2$ associated with ${x}_1^{l_1}$ and ${x}_2^{l_2}$ respectively and a witness trace $\trace$,
  and $\event_2$ \emph{may-depend} on $\event_1$ with this witness trace $\trace$ in this program $c$,
   denoted as
  $\vardep({x}_1^{l_1}, {x}_2^{l_2}, {c})$ as follows.
 }
\begin{center}
$
{\small   \begin{array}{l}
\exists \event_1, \event_2 \in \eventset^{\asn}, \trace \in \mathcal{T} \st
\pi_{1}{(\event_1)}^{\pi_{2}{(\event_1)}} = {x}_1^{l_1}
\land
\pi_{1}{(\event_2)}^{\pi_{2}{(\event_2)}} = {x}_2^{l_2}
\land
\eventdep(\event_1, \event_2, \trace, c)
  \end{array}
}%
$
\vspace{-0.2cm}
\end{center}
  \end{defn}

\subsection{Semantics-based Dependency Graph}
\label{sec:design_choice}
\begin{defn}[Semantics-based Dependency Graph]
  \label{def:trace_graph}
  Given a program ${c}$,
  its \emph{semantics-based dependency graph}
  $\traceG({c}) = (\traceV({c}), \traceE({c}), \traceW({c}), \traceF({c}))$ \highlight{consists of four components: vertices are all the
  labelled variables in the program $c$; directed edges are those pairs of vertices that one may depend on the other; weight $w$ of any vertex $x^l$
  is a function which takes an intial trace $\trace_0$ as input and returns the number of times $x^l$ appears in the execution trace of the program $c$ when starting from $\trace_0$; query annotation on any vertex
  can be see as a flag indicating if the command associated with this vertex is a query assignment or not.
  }
  The foraml definiton is defined as follows,
  {\small
  \[
  \begin{array}{lll}
    \text{Vertices} &
    \traceV({c}) & := \left\{
    x^l
    ~ \middle\vert ~ x^l \in \lvar(c)
    \right\}
    \\
    \text{Directed Edges} &
    \traceE({c}) & :=
    \left\{
    (x^i, y^j)
    ~ \middle\vert ~
    x^i, y^j \in \lvar(c) \land \vardep(x^i, y^j, c)
    \right\}
    \\
    \text{Weights} &
    \traceW({c}) & :=
    \{
    (x^l, w)
    ~ \vert ~
    w : \tdom_0(c) \to \mathbb{N}
    \land
    x^l \in \lvar(c) \land
    \forall \trace_0 \in \tdom_0(c), \trace' \in \tdom, l' \st
    \\ & & \qquad \qquad
    \config{{c}, \trace_0} \to^{*}
    \config{\clabel{\eskip}^{l'}, \trace_0 \tracecat \trace'}
    \land w(\trace_0) = \vcounter(\trace', l) \}
     \\
    \text{Query Annotations} &
    \traceF({c}) & :=
  \left\{(x^l, n)
  ~ \middle\vert ~
   x^l \in \lvar(c) \land
  (n = 1 \Leftrightarrow x^l \in \qvar(c))\land ( n = 0 \Leftrightarrow  x^l \notin \qvar(c))
  \right\}
  \end{array},
  \]
  A semantics-based dependency graph $\traceG({c})= (\traceV({c}), \traceE({c}), \traceW({c}), \traceF({c}))$ is \emph{well-formed} if and only if $ \{x^l \ |\ (x^l,w)\in \traceW({c})\} = \traceV({c}) $.
  }
  \end{defn}
We can now define the \emph{semantics-based dependency graph} of a program $c$ in Definition~\ref{def:trace_graph}. We want this graph to combine quantitative reachability information with dependency information.

 Vertices and query annotations are just read out from the program $c$. We have an edge in $\traceE(c)$ if we have a may-dependency between two labeled variables in $c$.
A weight function $w \in \traceW(c)$ is a function that for every starting trace $\trace_0 \in \mathcal{T}_0(c)$
gives the number of times the assignment of the corresponding vertex $x^l$ is visited. Notice that weight functions are total and with range $\mathbb{N}$. This means that if a program $c$ has some non-terminating behavior, the set $\traceW(c)$ will be empty.
To rule out this situation, we consider as well-formed only graphs with a weight for every vertex.
In the rest of the paper we  implicitly consider only well-formed semantics-based dependency graphs.

  \subsection{Adaptivity of a Program}
  \label{sec:sematnic_adaptivity}
 This notion of adaptivity is formulated in terms of an initial trace, specifying the value of the input variables, and of the walk on the graph $\traceG({c})$ which has the largest number of query requests.

In Fig.~\ref{fig:overview-example}(b), $\lambda \trace_0 \st (l^6 \to x^3)$ is a walk with two vertices and where each vertex is visited only once.
With the assumption that $k \geq 1$, $\lambda \trace_0 \st (l^6 \to {a^5 \to a^5 \to \ldots} \to x^3)$ is a walk where the vertex $a^5$ is visited $\env(\trace) k$ times.
However, $\lambda \trace_0 \st (l^6 \to a^5 \to x^3 \to x^3)$ is not a walk because there is no edge from $x^3$ to $x^3$.
The formal defintion of a walk is the following:
\begin{defn}[Walk]
\label{def:finitewalk}
Given a well-formed program $c$ with its semantics-based dependency graph $\traceG({c}) = (\traceV, \traceE, \traceW, \traceF)$, a \emph{walk} $k$ is a function that maps an initial trace $\trace_0$ to a sequence of vertices $(v_1, \ldots, v_{n})$
for which there is a sequence of edges $(e_1 \ldots e_{n - 1})$  satisfying
\begin{itemize}
\item $e_i = (v_{i},v_{i + 1}) \in \traceE$ for every $1 \leq i < n$,
\item and $v_i$ appears in $(v_1, \ldots, v_{n})$ at most $w_i(\trace_0)$ times for every $v_i \in \traceV$ and $(v_i, w_i) \in \traceW$.
\end{itemize}
We denote by $\walks(\traceG(c))$
the set of all the walks $k$ in $\traceG(c)$.
\end{defn}
Because for the adaptivity
we are interested in the dependency between queries,
we calculate a special ``length'' of a walk, the \emph{query length},  by counting only the vertices
corresponding to queries.
\begin{defn}[Query Length]
\label{def:qlen}
Given
the semantics-based dependency graph $\traceG({c})$ of a well-formed program $c$,
 and a \emph{walk}
 $k \in \walks(\traceG(c))$,
the \emph{query length} of $k$ is a function $\qlen(k):\tdom_0(c) \to \mathbb{N}$ that
given an initial trace $\trace_0\in \tdom_0(c)$
gives
the number of vertices that correspond to query variables in the vertex sequence of $k(\trace_0)$.
It is defined as follows.
\begin{center}
   $
  \qlen(k) = \lambda \trace_0 \st |\big( v \mid v \in (v_1, \ldots, v_{n}) \land (v, 1) \in \traceF(c)
  \land k(\trace_0) = (v_1, \ldots, v_{n}) \big)|,
$
\end{center}
where the notation $| (\ldots) |$ gives the number of vertices in a sequence.
\end{defn}

 We can now define the adaptivity of a well-formed program as follows.
\begin{defn}
    [Adaptivity of a Program]
    \label{def:trace_adapt}
    Given a well-formed program ${c}$,
    its adaptivity $A(c)$ is a function
    $A(c) : \tdom_0(c)\to \mathbb{N}$
    defined as follows.
\begin{center}
$
    A(c) = \lambda \trace_0 \st \max \big
    \{ \qlen(k)(\trace_0) \mid k \in \walks(\traceG(c)) \big \}
$
\end{center}
\end{defn}

\section{The Adaptivity Analysis Algorithm - {\THESYSTEM}}
\label{sec:algorithm}
{In this section, we present our program analysis {\THESYSTEM} for
computing an upper bound on the \emph{adaptivity} of a given program
$c$.
 The high-level idea behind {\THESYSTEM} is the following:
  \begin{enumerate}
    \item construct
    an \emph{estimated program dependency graph} \progG(c) of a program $c$;
    \item use a reachability-bound algorithm to estimate the weight of every vertex in the graph;
    \item  use an algorithm, which we call AdaptBD, to estimate an upper bound on the number of queries in the walk over the weighted dependency graph that visits the most query vertices. 
  \end{enumerate}
  The \emph{estimated dependency graph} overapproximates the
  semantics-based dependency graph in two dimensions: it
  overapproximates the dependencies between assigned variables (Section~\ref{sec:alg_edgegen}), and, it
  overapproximates the weights (Section~\ref{sec:alg_weightgen}). Thanks to this,
 we prove the upper bound provided by AdaptBD is a sound upper bound
  over the semantic-based adaptivity in Definition~\ref{def:trace_adapt} in Theorem~\ref{thm:adaptalg_soundness}.

\subsection{Weighted Dependency Graph Construction}
Given a program $c$, the set of vertices $\progV(c)$ and query annotations $\progF(c)$ of the \emph{estimated dependency graph} can be computed by simply
scanning the program $c$. These sets can be computed precisely and correspond to
the same sets in the semantics-based dependency graph.
This means that $\progG(c)$ has the same underlying vertex structure as
the semantics-based graph $\traceG(c)$. The differences will be in the sets of edges and weights.

\subsubsection{Edge Construction}
\label{sec:alg_edgegen}
The set of edges $\progE(c)$ is built using a {\em feasible data-flow analysis},
a data flow analysis over
our query language, combined with a reaching definition analysis.
The reaching definition analysis computes a set $\live(l, c)$ of reachable
variables at line $l$ in the program $c$.

The {\em feasible data-flow analysis} computes for every pair $x^i, y^j \in \lvar(c)$
 whether there is a flow from $x^i$ to $y^j$.
 This analysis is based on a relation $\flowsto(x^i, y^j, c)$ built over the sets $\live(l, c)$ for every location $l$. Its formal definiton is in the supplementary material.
It gives us an overapproximation of the \emph{variable may-dependency} relation for direct dependencies (dependencies that do not go through other variables).
We can now identify edges of the dependency graph by computing a transitive closure (through other variables) of the
 $\flowsto$ relation. There is a directed edge from  $x^i$ to $y^j$ if and only if there is chain of variables
    in the $\flowsto$ relation between $x^i$ and $y^j$, defined as follows:
   \begin{center}
$
\begin{array}{cl}
    \progE(c) \triangleq &
    \{
    (y^j, x^i)  ~ \vert ~ y^j, x^i \in \progV(c)
    \land
      \exists n, z_1^{r_1}, \ldots, z_n^{r_n} \in \lvar(c) \st
    \\
    & \qquad \qquad
      n \geq 0 \land
      \flowsto(x^i,  z_1^{r_1}, c)
      \land \cdots \land \flowsto(z_n^{r_n}, y^j, c)
    \}
    \end{array}
$
\end{center}
We prove that the set $\progE(c)$ soundly approximates the set $\traceG(c)$.
	\begin{lem}[Mapping from Egdes of $\traceG$ to $\progG$]
	\label{lem:edge_map}
	For every program $c$ we have:
   \begin{center}
$
	\begin{array}{l}
	\forall e = (v_1, v_2) \in \traceE(c)
	\st
	\exists e' \in \progE(c) \st e' = (v_1, v_2)
	\end{array}
$
\end{center}
	\end{lem}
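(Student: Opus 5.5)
Since $\progE(c)$ is defined as the transitive closure of $\flowsto$ over chains of labeled variables, the proof reduces to a semantic-to-syntactic soundness statement for $\flowsto$ lifted along dependency chains. Take an edge $(x^i, y^j) \in \traceE(c)$. By Definition~\ref{def:trace_graph}, $\vardep(x^i,y^j,c)$ holds. By Definition~\ref{def:var_dep}, there are events $\event_1$, $\event_2$ labeled by $x^i$ and $y^j$ and a witness trace $\trace$ with $\eventdep(\event_1,\event_2,\trace,c)$. The goal is to build a finite chain $x^i = z_0, z_1^{r_1},\ldots,z_n^{r_n}, y^j$ with consecutive $\flowsto$ facts, which puts the pair into $\progE(c)$. (The orientation convention of $\progE$ must be matched with that of $\traceE$; I will follow the direction in which $\flowsto$ runs from the depended-on variable to the dependent one.)

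The key intermediate lemma is proved by induction on the length of the execution segment $\trace$ between $\event_1$ and $\event_2$, using the two-run structure of Definition~\ref{def:event_dep}. It states: if perturbing $\event_1$ to a colocated, value-differing $\event_1'$ changes a later event $\event_2$ (data case (2b)), or flips a later test event whose branch governs $\event_2$ (control case (2c)), then there is a chain of $\flowsto$ facts from $\pi_1(\event_1)^{\pi_2(\event_1)}$ to $\pi_1(\event_2)^{\pi_2(\event_2)}$.

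For the data case, the proof walks backward from $\event_2$. The value or query normal form of $\event_2$ is determined by the expression at label $j$ evaluated in the current trace, via $\earrow$ or $\qarrow$. If $\event_2$ differs between the two runs, then either some variable $z$ read by that expression has a different latest value in the two traces, or the execution path reaching $j$ differs. In the first situation, the latest assignment to $z$ is some event at label $r$ whose definition reaches $j$, so $z^r \in \live(j,c)$ and $\flowsto(z^r, y^j, c)$ holds by the definition of $\flowsto$. That assignment event itself differs between the runs, so the induction hypothesis applies to the shorter segment. A determinism lemma for the semantics is needed: evaluation of an expression in two traces that agree on the latest values of its free variables yields the same result. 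The base case is when $z^r$ is $x^i$ itself and the event is $\event_1$.

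For the control case, the flipped test event $\event_b$ at some guard label is handled the same way, with the difference attributed to a variable read in the guard. This gives a chain to the guard's variables, and then a control-flow $\flowsto$ edge connects guard variables to the assignments in its body, which is where $\event_2 \in \trace_3$ lives. The conditions $\tlabel(\trace_3)\cap\tlabel(\trace_3') = \emptyset$ and the $\vcounter$ equalities ensure we are comparing the same loop iteration and that $y^j$ lies in a branch controlled by that guard.

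I expect the main obstacle to be the "execution path differs" subcase inside the data argument, together with loops. When the two runs diverge in control before reaching $j$, the differing latest definition of $z$ may come from a different iteration or branch in the two runs. This has to be reduced to a flipped test event, treated as in the control case, and the $\vcounter$ side conditions must be used to align iterations. This requires a careful simultaneous induction over both runs, and it relies on the exact definition of $\flowsto$ in the supplementary material capturing both reaching-definition and guard-to-body flows.
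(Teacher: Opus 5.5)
Your proposal is correct and follows essentially the paper's route: the paper also reduces the edge inclusion to the theorem that $\vardep$ implies a $\flowsto$ chain, which it proves by finding a middle event that depends on $\event_1$ and flows directly to $\event_2$ (Lemma~\ref{lem:depevents_exist}, argued by contradiction rather than by your backward walk) and then inducting on the trace (Theorem~\ref{thm:flowsto_event_soundness}), with data and control cases handled by reaching definitions and guard-to-body flows. Your orientation caveat is justified: under the literal definitions of $\traceE$ and $\progE$ the chain produces the reversed pair, a mismatch the paper's own proof does not address.
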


  \highlight{
  We provide a proof sketch and the complete proof is in the supplementary material, appendix~C.1.
  \begin{proof}
   This lemma is proved by Theorem~\ref{thm:flowsto_soundness}, saying
    that for arbitrary two labelled variables $x^i, y^j \in \lvar_c$ in the program $c$,
    there exists a static ``$\flowsto$ chain'' from $x^i$ to $y^j$ when
    the variable $x$ at location $l$ may depend on the variable $y$ at location $j$, as $\vardep(x^i, y^j, c)$.
  \end{proof}
\begin{thm}[$\vardep$ implies $\flowsto$]
  \label{thm:flowsto_soundness}
  Given a program ${c}$, for arbitrary two labelled variables  $ x^i, y^j \in \lvar_{{c}}$, if $x$ at location $i$ may depend on
  variable $y$ at location $j$, as $\vardep(x^i, y^j, {c})$,
  then
  there exist a non-empty list of variables [$z_1, \cdots,z_n $]  between locations $i$ and $j$ so that
  there exists a flow from $x^i$ to $y^j$ through these variables.
  \[
  \begin{array}{l}
    \forall x^i, y^j \in \lvar_{{c}}.
    \vardep(x^i, y^j, {c})
    \\ \quad \implies
    \Big( \exists n \in \mathbb{N}, z_1^{r_1}, \ldots, z_n^{r_n} \in \lvar_{{c}} \st n \geq 0 \land
    \flowsto(x^i,  z_1^{r_1}, c)
    \land \cdots \land \flowsto(z_n^{r_n}, y^j, c) \Big)
  \end{array}
  \]
  \end{thm}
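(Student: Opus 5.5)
We will prove the theorem by relating the trace semantics to the static reaching-definition analysis $\live$, and then assembling the chain of $\flowsto$ facts that underlies a semantic dependency. Since $\flowsto$ is only stated as being built from $\live$ (its formal definition is in the supplementary material), we assume it has the usual shape. That is, $\flowsto(x^i, z^r, c)$ holds when $x^i \in \live(r, c)$ and either $x$ occurs in the expression or query expression assigned at $r$ (a direct \emph{value} flow), or $x$ occurs in a guard $[b]^{r'}$ whose then/else branch or loop body contains the assignment at $r$ (a direct \emph{control} flow).

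\textbf{Step 1: reaching definitions are sound.} We first prove an invariant by induction on $\to^*$. Take any configuration reachable from $\config{c, \trace_0}$. If the trace's last event for a variable $x$ carries label $i$, and the next command to execute has label $l$, then $x^i \in \live(l, c)$. This is the standard soundness of a may-reaching-definitions analysis, proved rule by rule over the operational semantics. The while case uses the fixpoint property of $\live$ at loop heads.

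\textbf{Step 2: a noninterference-style lemma.} We fix the data in $\eventdep(\event_1, \event_2, \trace, c)$. Case (2b) gives a value change in $\event_2$, and case (2c) gives a flipped test event followed by $\event_2$. In both cases we run the two executions from $\config{c_1, \trace_1 \tracecat [\event_1]}$ and $\config{c_1, \trace_1 \tracecat [\event_1']}$ in lockstep, and track the set $S$ of labeled variables whose latest values may differ between the two runs.

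Initially $S = \{x^i\}$. An assignment $[z \leftarrow e]^r$ can produce differing values only if some variable read in $e$ has its reaching definitions in $S$. By Step 1 such a variable $u^s$ satisfies $u^s \in \live(r,c)$, so $\flowsto(u^s, z^r, c)$ holds, and we add $z^r$ to $S$. The runs can diverge in control only at a test whose guard reads a variable in $S$. After divergence, every assignment executed in one run but not the other lies inside the affected branch or loop body, and so receives a control $\flowsto$ edge from the guard's variables. The counter conditions $\vcounter(\trace, \cdot) = \vcounter(\trace', \cdot)$ ensure that we compare the same iteration. The invariant we maintain is that every element of $S$ is reachable from $x^i$ by a $\flowsto$ chain. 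At the point where $\event_2$ (an assignment to $y^j$) differs in value or occurrence, $y^j$ must enter $S$ through one of the rules above, which yields the required chain $\flowsto(x^i, z_1^{r_1}), \ldots, \flowsto(z_n^{r_n}, y^j)$.

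\textbf{Main obstacle.} The hardest step is the control case (2c). After the runs diverge, the traces $\trace_3$ and $\trace_3'$ have disjoint labels, and variables assigned only in one run make the ``latest value'' comparisons asymmetric. We plan to handle this by also putting into $S$ every labeled variable assigned within the scope of a diverged guard; these are exactly the variables with a control $\flowsto$ edge. We then show that the runs re-synchronize at the end of that if or while, with agreement outside $S$. Loops need particular care, because a differing guard changes the number of iterations. Here we would induct on the number of loop unfoldings and use the iteration-count equality built into Definition~\ref{def:event_dep}.
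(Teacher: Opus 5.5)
Your forward argument, a taint set $S$ pushed through a lockstep run of the two executions and resting on soundness of $\live$, is a different route from the paper's. The paper argues backwards on the witness trace $[\event_1]\tracecat\trace'\tracecat[\event_2]$. Either $\flowsto$ holds directly, or Lemma~\ref{lem:depevents_exist} gives an intermediate $\event\in\trace'$ with $\eventdep(\event_1,\event,\ldots)$ on a shorter trace and a direct $\flowsto$ into $\event_2$'s variable. Induction on the trace (Theorem~\ref{thm:flowsto_event_soundness}) then closes the chain, so the paper never simulates the two runs against each other. Your route is viable, but the step tying it to Definition~\ref{def:event_dep} is wrong as stated: the conditions $\vcounter(\trace,\cdot)=\vcounter(\trace',\cdot)$ do \emph{not} ensure that you compare the same iteration. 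They pair $\event_2$ with $\event_2'$ (resp.\ $\event_b$ with $\neg\event_b$) only by the number of earlier occurrences of one label, and an earlier divergence can shift that count. For instance, take a loop running twice whose body first sets $j$ to $2$ or $1$ according to $x>0$ in the first iteration and to $1$ in the second, and then runs $\ewhile[j>0]^{10}\edo([j\leftarrow j-1]^{11};[y\leftarrow 5]^{12})$. The tests at label $10$ are $\etrue,\etrue,\efalse,\etrue,\efalse$ in one run and $\etrue,\efalse,\etrue,\efalse$ in the other. The fourth ones differ, both runs then reach the outer loop, and (2c) holds with $\event_2=(y,12,5,\bullet)$. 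Yet in lockstep the second outer iteration is identical in both runs: no guard diverges there, and $\event_2$ coincides with its lockstep counterpart. So the ``point where $\event_2$ differs in value or occurrence'' on which your conclusion rests does not exist.

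The conclusion still holds here (through the definition of $j$ guarded by $x>0$), but you need a case you have not planned for. If the $\vcounter$-paired events are not lockstep-aligned, then their label was executed inside some earlier diverged region. Hence that label is, or lies syntactically under, a guard reading a variable whose latest definition in one of the runs is in $S$. In case (2b) that label is $y^j$'s own; in case (2c) $y^j$ is nested under $\pi_2(\event_b)$. Either way your syntactic tainting rule (taint every labeled variable in $\lvar$ of the guarded branch or loop body) supplies a control $\flowsto$ edge from $S$ into $y^j$. That nesting, like your re-synchronization ``at the end of that if or while'', presupposes that $c_2$ is the join point of the construct at $\pi_2(\event_b)$. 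The literal side condition $\tlabel(\trace_3)\cap\tlabel(\trace_3')=\emptyset$ does not force this: it is vacuous when $\trace_3'=[]$, and one run may go once more around an enclosing loop before reaching the other's configuration, putting into $\trace_3$ assignments outside the flipped guard's scope. So state that reading explicitly. Finally, phrase Step~1 as an invariant preserved by $\to$, not as a property of configurations reachable from $\config{c,\trace_0}$. The second run starts at $\config{c_1,\trace_1\tracecat[\event_1']}$, which need not be reachable, and the invariant transfers only because it mentions labels, not values.
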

  \begin{proof}
    It is sufficient to show two cases:
    \begin{enumerate}
      \item  $x^i$ is directly used in the expression of the assignment command associated to $y^j$, or a boolean
      expression of the guard for a if or while command with the assignment command associated to $y^j$ showing up in the body of that command,
      we call it, $x^i$ \emph{directly} flows to $y^j$,
      i.e.,  $ \flowsto(x^i, y^j, {c})$.
      \item there exists another labelled variable $z^l$ with \emph{variable May-Dependency} relation on $x^i$ and
      $z^l$ directly flows to $y^j$, where the \emph{variable May-Dependency} relation between $x^i$ and $z^l$ implies a ``sub $\flowsto$-chain'' from $x^i$ to $z^l$,
      formally,
      $\Big(
        \exists z^l \in \lvar_{{c}}.
      (\vardep(x^i, z^l, {c})
      \implies
        \exists n \in \mathbb{N}, z_1^{r_1}, \ldots, z_n^{r_n} \in \lvar_{{c}} \st n \geq 0 \land
      \flowsto(x^i,  z_1^{r_1}, c)$
      $\land \cdots \land \flowsto(z_n^{r_n}, z^l, c)
      )
      \land  \flowsto(z^l, y^j, {c})
      \Big)$.
    \end{enumerate}
The first case is proved by the base case of induction on the trace, and the inversion on expression and event.
\\
The second case is proved by proving the following two lemmas:
\begin{lem}[The Multiple-Steps Event Dependency Inversion]
  \label{lem:depevents_exist}
For every hidden database $D$ , a program $c$,a trace $\trace$, and two assignment events
$\event_1, \event_2 \in \eventset^{\asn}$,
if the trace $\trace$ has the form $\trace = [\event_1] \tracecat \trace' \tracecat [\event_2]$ with $\trace' \in \mathcal{T}$,
and the event $\event_2$ may depend on the event $\event_1$,
then there exists a flow from the the labelled variable of $\event_1$ to
the labelled variable of $\event_2$.
 Otherwise there exists
$\event \in \trace'$ such that
$\left(
    \eventdep(\event_1, \event, \trace[\event_1:\event], c, D)
\land
\flowsto(\pi_1(\event)^{\pi_2(\event)}, \pi_1(\event_2)^{\pi_2(\event_2)}, c)
\right)$.
\end{lem}
\begin{thm}[$\eventdep$ implies $\flowsto$]
  \label{thm:flowsto_event_soundness}
  For every hidden database $D$, a program $c$,a trace $\trace$, and two assignment events
$\event_1, \event_2 \in \eventset^{\asn}$,
if the trace $\trace$ has the form $\trace = [\event_1] \tracecat \trace' \tracecat [\event_2]$ with $\trace' \in \mathcal{T}$,
and the event $\event_2$ may depend on the event $\event_1$,  then there exists
    $z_1^{r_1}, \ldots, z_n^{r_n} \in \lvar_{{c}}$ with $n \geq 0$ such that
  $\flowsto(x^i,  z_1^{r_1}, c)
  \land \cdots \land \flowsto(z_n^{r_n}, y^j, c)$
  \end{thm}
\begin{enumerate}
  \item Lemma~\ref{lem:depevents_exist}: existence of a middle event.
  This is proved by showing a contradiction, with detail in supplementary material Appendix~C.2.%
  \item Theorem~\ref{thm:flowsto_event_soundness}: the middle event with a sub-trace implies a ``sub $\flowsto$-chain''.
  This is proved by induction on the trace $\trace$ and applying the induction hypothesis, in appendix C.3
\end{enumerate}
\end{proof}
}
\subsubsection{Weight Estimation}
\label{sec:alg_weightgen}

The weight of every vertex is obtained as an upper bound on
the number of times the corresponding command can be executed,
which can be statically estimated by \emph{reachability-bound analysis}~\cite{GulwaniZ10}.
Our reachability-bound analysis adapts ideas from previous works~\cite{ZulegerGSV11,SinnZV14,sinn2017complexity}
to our query language setting. First, we need to
construct an abstract transition graph $\absG(c)$ of a program $c$, a
graph with the set of labels of program points in $c$  (including a
label $\lex$ for the exit point) as the set of
vertices $\absV(c)$, and with the set of transitions in $c$ as the set of
edges $\absE(c)$. Each edge of the graph is annotated with either
the symbol $\top$, a boolean expression or a \emph{difference
constraint}~\cite{sinn2017complexity}. A difference constraint is an
inequality of the form $x' \leq y + v$ {or $x' \leq v$}
where $x, y$ are variables and $v \in \constdom$ is a symbolic
constant: either a natural number, the symbol $\infty$, an input
variable or a symbol $Q_m$ representing a query request. Difference constraints describe
quantitative conditions that are guaranteed by the execution of the command with
the corresponding label. We denote by
$\dcdom$ the set of difference constraints.

The set $\progW(c)$ of weights for the estimated dependency graph is estimated from
the Abstract Transition Graph $\absG(c)$ of $c$ using reachability-bound analysis.
 Specifically, we use the weight of a vertex with label $l$ as the \emph{symbolic upper bound} on the
 execution times of the command with label $l$ that we obtain using our reachability-bound analysis.
 These symbolic upper bounds are expressions with the input variables as free variables,
 hence they can be used to upper bound the weight functions of the semantics-based dependency graphs.
We denote the reachability-bound of an edge $e = (l, dc, l') \in \absE(c)$ as $\absclr(e, c)$.
More details are in the supplementary material.

\begin{defn}[Weight Estimation]
  \vspace{-0.1cm}
 \label{def:adaptfun-weight}
The estimated weight set $\progW(c)$ of program $c$ is a set of pairs $\progW(c) \in \mathcal{P}(\mathcal{LV} \times \constdom)$.
Each pair maps
a vertex $x^l \in\progV(c)\subseteq  \lvar(c)$ to a symbolic expression over $\constdom$,
as follows.
 $
 \progW(c) \triangleq
 \left\{ (x^l, \hat{w})
\mid
x^l \in \progV(c)
\land
\hat{w} =
\sum\left\{ \absclr(e, c) \middle\vert e \in \absE(c) \land e = (l, \_, \_) \right\}
\right\}.
$
\end{defn}
\begin{thm}[Soundness of the Weight Estimation]
 \label{thm:addweight_soundness}
 \vspace{-0.1cm}
For a program ${c}$ and its estimated weight set $\progW(c)$, for every  $(x^l, w) \in \traceW(c) $,
we have a corresponding estimated weight $(x^l, \hat{w}) \in \progW(c)$ and for every possible
$\vtrace_0 \in \tdom_0(c),
v \in \mathbb{N}$,
if $\config{\vtrace_0, \hat{w}} \earrow v$,
then $v$ is an upper bound on ${w}(\trace_0)$. In symbols:
\[
 \begin{array}{l}
 \forall c \in \cdom, x^l \in \lvar(c),w\in \tdom_0(c)\to \mathbb{N},\trace_0 \in \tdom_0(c),
v \in \mathbb{N}\cup \{\infty\},
\\ \qquad
(x^l, w) \in \traceW(c) \st \Longrightarrow
\exists
(x^l, \hat{w}) \in \progW(c)
\land
\big(
 \config{\vtrace_0, \hat{w}} \earrow v \implies w(\trace_0) \leq v
\big)
\end{array}
\]
\end{thm}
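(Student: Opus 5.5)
The proof reduces the statement to a soundness property of the reachability-bound analysis $\absclr$ on the abstract transition graph $\absG(c)$. The connection to the semantic weight is that every execution of the command labeled $l$ corresponds to exactly one traversal of an outgoing edge $(l,\_,\_)$ in $\absE(c)$. Hence $\vcounter(\trace',l)$ equals the total number of times the edges leaving $l$ are taken. Since $\progW(c)$ assigns to $x^l$ the sum of $\absclr(e,c)$ over exactly those edges, it suffices to show that each $\absclr(e,c)$, evaluated under $\trace_0$, bounds the number of traversals of $e$.

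\emph{Step 1: existence.} Given $(x^l,w)\in\traceW(c)$ we have $x^l\in\lvar(c)=\progV(c)$, so by Definition~\ref{def:adaptfun-weight} the pair $(x^l,\hat w)\in\progW(c)$ exists.

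\emph{Step 2: simulation lemma.} We prove that every concrete execution $\config{c,\trace_0}\to^*\config{\clabel{\eskip}^{l'},\trace_0\tracecat\trace'}$ induces a path in $\absG(c)$ from the entry label to $\lex$. The proof is by induction on the derivation of $\to^*$, with a case analysis on the operational rules. At each step we check two things. First, the label of the command being executed is the source of an abstract edge taken. Second, the pair of consecutive concrete states satisfies the annotation of that edge: $\top$, the boolean guard, or the difference constraint $x'\le y+v$ or $x'\le v$. For query assignments, $Q_m$ is interpreted as a bound on the query result range. This step also yields $\vcounter(\trace',l)=\sum_{e=(l,\_,\_)}\#_e(\pi)$, where $\#_e(\pi)$ is the number of occurrences of $e$ on the induced path $\pi$. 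Assignment and query events are generated exactly once per execution of their labeled command, so the count matches.

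\emph{Step 3: per-edge bound.} We show that for each edge $e$, if $\config{\trace_0,\absclr(e,c)}\earrow v_e$ then $\#_e(\pi)\le v_e$. This is the main obstacle, since it is essentially the soundness theorem of difference-constraint bound analysis~\cite{sinn2017complexity,ZulegerGSV11} transplanted to our setting. I would follow their proof structure. First find local bounds: edges not on any cycle of $\absG(c)$ are traversed at most once, and an edge on a strongly connected component is bounded by a local ranking variable that is decreased on it. Then, by induction on the recursive definition of $\absclr$, show that the total increments and resets of each ranking variable along $\pi$ bound the decreases. The new ingredients needing care are the $Q_m$ and $\infty$ constants. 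The $\infty$ case is trivial. For $Q_m$ we need that the semantic query answer is bounded by the symbolic value.

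\emph{Step 4: conclusion.} Assume $\config{\trace_0,\hat w}\earrow v$. Evaluation of a sum is the sum of the evaluations, so $v=\sum_e v_e$. Combining Steps 2 and 3 gives $w(\trace_0)=\vcounter(\trace',l)=\sum_e\#_e(\pi)\le\sum_e v_e=v$. Well-formedness guarantees termination, so $\trace'$ exists and the argument applies for every $\trace_0$.
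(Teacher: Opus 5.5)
Your proposal is correct and takes essentially the same route as the paper's (supplementary) argument. You simulate each concrete execution by a path in $\absG(c)$ whose transitions respect the edge annotations, so that $\vcounter(\trace',l)$ equals the number of traversals of the edges leaving $l$; you then invoke the soundness of the transition bounds $\absclr(e,c)$ inherited from the difference-constraint bound analysis, adapted for $Q_m$, $\infty$ and evaluation under $\trace_0$, and sum over outgoing edges exactly as in Definition~\ref{def:adaptfun-weight}.
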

Notice that $\hat{w} \in \constdom$ is an expression over symbols in $\constdom$. In particular, it may contain the input variables and so it may effectively be used as a function of the input - and capture loop bounds in terms of these inputs.
Also in this theorem, the evaluation $\config{\vtrace_0, \hat{w}} \earrow v$ is needed to obtain a concrete value $v$ from the symbolic weight $\hat{w}$ by specifying a value for the input variables via $\vtrace_0$.

\begin{defn}
  [Estimated Dependency Graph]
  \label{def:prog_graph}
  Given a program $c$
  with the feasible data flow relation $\flowsto(x^i, y^j, c)$ for every $x^i, y^j \in \lvar(c)$,
  and reachability-bound, $\absclr(e, c)$ for every edge $e \in \absE(c)$ on the abstract transition graph,
  the estimated dependency graph of $c$ has four components:
  \[\progG(c) = (\progV(c), \progE(c), \progW(c), \progF(c))\]
\end{defn}

\subsection{Adaptivity Upper Bound Computation}
\label{sec:alg_adaptcompute}
We estimate the adaptivity upper bound,
$\progA(c)$ for a program $c$ as the maximal query length over all finite walks
in its \emph{estimated dependency graph}, $\progG({c})$.

\begin{defn}
[{Estimated Adaptivity}]
\label{def:prog_adapt}
{
Given a program ${c}$ and its estimated dependency graph
$\progG({c})$
the estimated adaptivity for $c$ is
\begin{center}
$
\progA({c})
\triangleq \max
\left\{ \qlen(k) \ \mid \  k \in \walks(\progG(c))\right \}.
$
\end{center}
}
\end{defn}

Different from a walk on $\traceG(c)$, a walk $k \in \walks(\progG(c))$ on the graph $\progG(c)$
 does not rely on an initial trace.
 The reason is that we use symbolic expressions over input variables, similarly to what we did for the weights in $\progW(c)$ in the previous section. Thus the adaptivity bound $\progA(c)$ will also be a symbolic arithmetic expression over the input variables. With this symbolic expression, we can prove the upper bound sound with respect to any initial trace.

\begin{thm}[Soundness of $\progA(c)$]
    \label{thm:sound_progadapt}
    For every program $c$,
    its estimated adaptivity is a sound upper bound of its adaptivity.
\begin{center}
$
     \forall \trace_0 \in \tdom_{0}(c), v \in \mathbb{N}\cup \{\infty\} \st
\config{\progA(c), \trace_0} \earrow v \implies A(c)(\trace_0) \leq v.
$
\end{center}
\end{thm}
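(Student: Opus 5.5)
The plan is a simulation argument: every walk on $\traceG(c)$ evaluated at $\trace_0$ is turned into a walk on $\progG(c)$ with the same query length, and the symbolic maximum defining $\progA(c)$ is then shown to dominate the concrete maximum. Fix an initial trace $\trace_0 \in \tdom_0(c)$ and a value $v$ with $\config{\progA(c), \trace_0} \earrow v$. Since $A(c)(\trace_0)$ is a maximum over walks $k \in \walks(\traceG(c))$, it suffices to show $\qlen(k)(\trace_0) \leq v$ for every such $k$. So fix $k$ and write $k(\trace_0) = (v_1, \ldots, v_n)$, with edges $e_1, \ldots, e_{n-1}$ as in Definition~\ref{def:finitewalk}.

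First I will check that the vertex sequence $(v_1,\ldots,v_n)$ is also a walk in $\progG(c)$, after instantiating the symbolic weights at $\trace_0$. Vertices and query annotations agree, because $\progV(c)=\traceV(c)$ and $\progF(c)=\traceF(c)$, both being read off the syntax. Each edge $(v_i,v_{i+1}) \in \traceE(c)$ lies in $\progE(c)$ by Lemma~\ref{lem:edge_map}. For the weight condition, take $v_i$ with $(v_i,w_i)\in\traceW(c)$. By Theorem~\ref{thm:addweight_soundness} there is $(v_i,\hat w_i)\in\progW(c)$ such that, whenever $\config{\trace_0,\hat w_i}\earrow u_i$, we have $w_i(\trace_0)\le u_i$. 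The number of occurrences of $v_i$ in the sequence is at most $w_i(\trace_0)$, hence at most $u_i$. Since query length depends only on the vertex sequence and on $\traceF = \progF$, the two walks have the same query length.

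Next I will relate this to $\progA(c)$. Walks on $\progG(c)$ are symbolic: their visit bounds are the expressions $\hat w$. So I must argue that the symbolic maximum in Definition~\ref{def:prog_adapt}, once evaluated at $\trace_0$, is at least the query length of every sequence that respects the instantiated bounds $u_i$. I would make this precise by a compositionality lemma: evaluating $\progA(c)$ under $\trace_0$ equals the maximum, over vertex sequences valid in $\progE(c)$ with multiplicities bounded by the evaluated weights, of their query lengths. Equivalently, evaluation commutes with the max-of-sums form of $\progA(c)$, which is monotone in the weights. The case where some $\hat w$ evaluates to $\infty$ makes $v=\infty$, and the bound is trivial.

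This last step is the main obstacle, because $\progA(c)$ is defined through symbolic walks and the paper gives no explicit evaluation semantics for it. To handle it, I would treat a symbolic walk as a walk in the graph carrying symbolic multiplicity bounds, and show by induction on the structure of the max/sum expression that $\earrow$ is monotone and commutes with max and sum. The earlier steps are then direct applications of Lemma~\ref{lem:edge_map} and Theorem~\ref{thm:addweight_soundness}. Taking the maximum over $k$ gives $A(c)(\trace_0)\le v$.
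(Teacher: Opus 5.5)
Your proposal is correct and follows essentially the paper's argument. Each walk of $\traceG(c)$ is carried over with the same vertex sequence to $\progG(c)$, using $\traceV(c)=\progV(c)$ and $\traceF(c)=\progF(c)$, Lemma~\ref{lem:edge_map} for the edges, and Theorem~\ref{thm:addweight_soundness} for the multiplicity bounds; the maxima are then compared pointwise at $\trace_0$. One side remark is false as stated: a weight evaluating to $\infty$ does not force $v=\infty$, since a vertex lying on no cycle is visited at most once by any walk. It is not needed, though, because your general argument already covers $u_i=\infty$.
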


Symbolic expressions as used in the weight are great to express symbolic bounds but make the direct computation of
a maximal walk harder. Specifically, to directly compute the estimated adaptivity $\progA$,
one has to traverse the estimated dependency graph.
A naive traversing strategy leads to non-termination
because the weight of each vertex in $\progG({c})$
is a symbolic expression containing input variables.
We could try to use a depth first search strategy
using the longest weighted path to approximate
the finite walk with the weight as
its visiting time. However, this approach would  consistently and considerably over-approximate the adaptivity. These challenges
motivated us to develop an algorithm,   $\pathsearch$,  that provides a sound upper bound
on $\progA$.

\subsection{$\pathsearch$ Algorithm}

\subsubsection{$\pathsearch$ overview}
The idea of the $\pathsearch$ algorithm is to reduce the task of computing the
walk with the maximal query length in the dependency graph to the task of
finding the longest weighted path in a simplified dependency graph which happens to be a directed acyclic graph (DAG).
To get the simplified graph, the following steps are performed in $\pathsearch$:
\begin{enumerate}
  \item The original dependency graph is partitioned into
  its strongly connected components (SCCs) using the standard Kosaraju’s algorithm.
  \item  For each SCC, we compute a local adaptivity bound using an helper algorithm $\pathsearch_{\kw{SCC}}$, which we describe in the next section.
  \item  $\pathsearch$ shrinks the estimated dependency graph into a directed acyclic graph
  by reducing each SCC into a vertex with the weight equal to its adaptivity local bound.
\end{enumerate}
After simplifying the graph, $\pathsearch$ uses breadth first search  to find the longest
weighted path on this DAG and return this length as the adaptivity upper bound. Notice that the problem of computing the longest weighted path is in general unfeasible but since we are in a DAG the complexity is O(V+E) where $V$ and $E$ represent the number of vertices and the number of edges
in the simplified graph. The pseudo-code for $\pathsearch$ can be found in the supplementary material. In the next section we present the helper algorithm $\pathsearch_{\kw{SCC}}$.

\subsubsection{Algorithm $\pathsearch_{\kw{SCC}}$}
The high level idea of $\pathsearch_{\kw{SCC}}$ is the following:
to provide an upper bound on the local adaptivity of an Strong Connected Component of a graph,
we need to compute an upper bound on the maximal query length of walks in this SCC.
For the sake of illustrating the algorithm, suppose we have an SCC with just three vertices $a^{3}_{1}$, $b^{2}_{0}$,
$c^{3}_{1}$, where $a$ and $c$ are query vertices with weight $3$ and $b$ is a standard assignment vertex with weight $2$. Suppose the SCC is a cycle with edge $a^{3}_{1} \to b^{2}_{0}$, $b^{2}_{0} \to c^{3}_{1}$ and $c^{3}_{1}\to a^{3}_{1}$.
A walk with maximal query length is for instance: $a \to b \to c \to a \to b \to c \to a$. The query length of this walk is $5$ and it can be approximated by
 first finding a path which starts from a vertex and stops when it reaches the same vertex, such as $a \to b \to c \to a$. Then we can compute a sound upper bound by taking the product of
 the minimal weight along the path ($\min(3,2,3,3)$) and the number of queries of this path, $3$ ($a$ and $c$ are query vertices). In this case, the upper bound is $6$.
Because we are
 traversing an SCC, if we start from one vertex, it will finally go back to the same vertex.
 $\pathsearch_{\kw{SCC}}$ uses the above intuition and traverses an SCC starting from every possible
 vertex, keeps track of the minimal weight (line: 7 in Algorithm~\ref{alg:adaptscc}) and the number of queries (line: 8 in Algorithm~\ref{alg:adaptscc}) and
 stops when it goes back to the same vertex (line: 11 in Algorithm~\ref{alg:adaptscc}). When a cycle is detected,
 the upper bound is updated (line: 12 in Algorithm~\ref{alg:adaptscc}) by taking the maximal between the current upper bound and the product of the minimal weight
 and the number of query vertices along the visited path as we discussed in the example above. When all the traversals starting from
 every vertex of this SCC are finished, the obtained upper bound will be the maximal approximated one.
 The traversal is naturally performed by a Depth First Search, whose complexity is O(V+E), where $V$ and $E$
 represent the number of vertices and edges in this SCC.  $\pathsearch_{\kw{SCC}}$ performs such
 a traversal for every vertex so the total complexity for $\pathsearch_{\kw{SCC}}$ is O((V+E)*V).

{\footnotesize
\begin{algorithm}
            \caption{
            {\small Adaptivity Bound Algorithm on An SCC ({$\kw{\pathsearch_{scc}(c, SCC_i)}$})}
            \label{alg:adaptscc}
            }
            \begin{algorithmic}[1]
              \REQUIRE The program $c$,
              A strong connected component of $\progG(c)$: $ \kw{SCC_i} = (\vertxs_i, \edges_i, \weights_i, \qflag_i)$
            \STATE {\bf init.}
            $\kw{r_{scc}}$: $\mathcal{A}_{\lin}$ List with initial value $0$.
            \STATE {\bf init.}
            $\kw{visited}$ : $\{0, 1\}$ List with initial value $0$;
            $\kw{r}$ : $\mathcal{A}_{\lin}$ List, initial value $\infty$;
            \\ \qquad
            $\kw{flowcapacity}$: $\mathcal{A}_{\lin}$ List, initial value $\infty$;
            $\kw{querynum}$: INT List, initial value $\qflag_i(v)$.
            \STATE {\bf if} $|\vertxs_i| = 1$ and $|\edges_i| = 0$:
            \quad {\bf return}  $\qflag(v)$
            \STATE  {\bf def} {$\kw{dfs(G,s,visited)}$}:
            \STATE \qquad {\bf for} every vertex $v$
            connected by a directed edge from $s$:
            \STATE \qquad \qquad {\bf if} $\kw{visited}[v] = \efalse$:
            \STATE \qquad \qquad \qquad {$\kw{flowcapacity[v] = \min(\weights_i(v), {flowcapacity}[s])}$}; \qquad \qquad \#\{{track minimal weight so far}\}
            \STATE \qquad \qquad \qquad {$\kw{querynum[v] = querynum[s] + \qflag_i(v)}$}; \qquad \qquad \qquad \qquad\#\{{track numbers of queries so far} \}
            \STATE \qquad \qquad \qquad {$\kw{r[v] =  \max(r[v], flowcapacity[v] \times querynum[v]}) $}; \#\{{track adaptivity upper bound so far} \}
            \STATE \qquad \qquad \qquad  $\kw{visited}[v] = \etrue$; 
            \quad $\kw{dfs(G, v, visited)}$;
            \STATE \qquad \qquad {\bf else}: \#\{There is a cycle finished\}
            \STATE \qquad \qquad \qquad
            {\small{$\kw{r[v] =  \max(r[v], r[s] +\min(\weights_i(v), {flowcapacity}[s]) \times (querynum[s] + \qflag_i(v)))}$}};
            \STATE \qquad {\bf return}  $\kw{r[c]}$
            \STATE  {\bf for} every vertex $v$ in $\vertxs_i$:
            \STATE  \qquad initialize the $\kw{visited, r, flowcapacity, querynum}$ with the same value at line:2.
            \STATE  \qquad $\kw{r_{scc} = \max(r_{scc}, dfs(SCC_i, v, \kw{visited} ))}$;
            \RETURN  $\kw{r_{scc}}$
            \end{algorithmic}
            \end{algorithm}
}

The pseudo-code of $\pathsearch_{\kw{SCC}}$ is given as Algorithm~\ref{alg:adaptscc}.
This algorithm takes as input the program $c$ and a $\kw{SCC_i}$ of
$\progG(c)$, and outputs an adaptivity bound for $\kw{SCC_i}$.
If $\kw{SCC_i}$ contains only one vertex, $x^l$ without any edge, $\kw{\pathsearch_{scc}}$ returns the query annotation of $x^l$ as the adaptivity.
If $\kw{SCC_i}$ contains at least one edge,
$\kw{\pathsearch_{scc}}$
recursively computes the adaptivity upper bound on the fly of paths collected through a DFS procedure $\kw{dfs}$ (lines: 4-13). This procedure guarantees that the visiting times of each vertex are upper bounded by its weight, and addresses the approximation challenge,
via two special lists parameters $\kw{flowcapacity}$ and $\kw{querynum}$ (lines:7-11).

$\kw{flowcapacity}$ is a list of symbolic expressions $\mathcal{A}_{in}$ which tracks the minimum weight
when searching a path,
and updates the weight when the path reaches a certain vertex. It guarantees that every vertex on the estimated walk is allowed to be visited at most $\kw{flowcapacity[v]}$ times, and this walk is a valid finite walk.
$\kw{querynum}$ is a list of integer
initialized with the value of the query annotation $\qflag_i(v)$ for every vertex.
It tracks the total number of vertices with query annotation $1$
along the path. It guarantees that $\kw{flowcapacity[v] \times querynum[v]}$ computes an accurate query length
because $\kw{querynum[v]}$ is only the number of the vertices with query annotation $1$.

\highlight{An interesting point of our algorithm is it may produce quadratic adaptivity estimates for examples with
nested while loops, such as $\kw{loop2VD}$ in Table~\ref{tb:adapt-imp} in Section~\ref*{sec:implementation}. In $\kw{loop2VD}$,
it includes a starndard nested while loop of the following form:
$$ \assign{j}{k}; \assign{i}{k}; \ewhile ~[j>0]~ \edo~ \big( \assign{j}{j-1}; ~\ewhile~[i > 0]~\edo~( \assign{i}{i-1}; ~\dots~)~\dots~ \big)  $$
The corresponding dependency graph contains an outside SCC corresponding to the outer loop, and inside this SCC, there
exists an inner SCC corresponding to the inner loop. With the help of the reachability bound analysis, we can have linear bound as
the weight in SCCs, that is $k$ for both loops in the example. According to our Algorithm~\ref{alg:adaptscc}, the inner SCC will be
simplified as one vertex with weight $k$ and the bound for the outer SCC will become $k*k$.       }

{\begin{thm}[Soundness of $\pathsearch$]
    \label{thm:adaptalg_soundness}
    For every program $c$, we have
\begin{center}
$
\pathsearch(\progG({c})) \geq \progA(c).
$
\end{center}
\end{thm}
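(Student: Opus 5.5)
Fix an arbitrary walk $k \in \walks(\progG(c))$, i.e.\ a vertex sequence $(v_1,\dots,v_n)$ with consecutive pairs in $\progE(c)$ and each vertex $v$ occurring at most $\hat w_v$ times. The goal is $\qlen(k) \le \pathsearch(\progG(c))$; taking the maximum over all walks then gives the theorem. The argument has two levels, matching the algorithm: the SCC condensation, then each SCC separately.

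\textbf{Condensation.} Let $\kw{SCC}_1,\dots,\kw{SCC}_m$ be the strongly connected components computed by Kosaraju's algorithm. Because the condensation is a DAG, once the walk leaves a component it never returns. Hence the walk splits into maximal contiguous segments $k_1,\dots,k_p$, each lying entirely in one component. The components $\kw{SCC}_{i_1},\dots,\kw{SCC}_{i_p}$ visited in this order are pairwise distinct and form a path in the condensed DAG. The query length is additive over segments:
\[
\qlen(k)=\sum_t \qlen(k_t).
\]
Suppose we show the local bound $\qlen(k_t) \le \pathsearch_{\kw{SCC}}(c,\kw{SCC}_{i_t})$ for each $t$. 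Then $\qlen(k)$ is at most the weight of a path in the condensed DAG. The longest-weighted-path computation returns the maximum over all such paths, so the global bound follows. This step is routine graph reasoning.

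\textbf{Local bound for a single segment $k_t$ inside $S=\kw{SCC}_{i_t}$.}
\begin{itemize}
\item \emph{Trivial component.} If $S$ has one vertex and no edge, the segment has length $1$, and its query length is $\qflag(v)$, which is exactly what line 3 returns.
\item \emph{Nontrivial component.} Decompose the segment as a closed walk: repeatedly strip simple cycles, leaving a simple path. Each simple cycle $C$ through vertices $u_1,\dots,u_r$ can be traversed at most $\min_j \hat w_{u_j}$ times, since every traversal consumes one visit of each of its vertices. So the total query contribution is at most the sum of $\min(\text{weights on the cycle}) \times (\text{queries on the cycle})$.
\end{itemize}
The DFS in Algorithm~\ref{alg:adaptscc} computes exactly such products. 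Line 9 handles simple paths (flowcapacity times querynum along the DFS path). Line 12, when an already-visited vertex closes a cycle, adds the product for that cycle to the value $\kw{r}[s]$ accumulated so far. Lines 14--16 maximise over all start vertices.

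\textbf{Main obstacle.} The hardest step is showing that the DFS-based accumulation dominates \emph{every} cycle decomposition of the segment. DFS explores only one tree per start vertex, and line 12 adds a cycle's contribution only along tree edges plus one back edge. It is therefore not obvious that every combination of cycles used by the walk is counted.

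The first approach is an induction on the DFS, with the invariant that $\kw{r}[v]$ upper-bounds the query length of any walk in $S$ starting at the DFS root and ending at $v$ that respects the weights. Two facts drive the inductive step.
\begin{itemize}
\item Monotonicity of $\min$ and $\max$ over symbolic expressions with nonnegative input values. This lets us compare symbolic bounds under any evaluation $\config{\vtrace_0,\cdot}\earrow v$.
\item Any vertex occurs at most $\hat w_v$ times, so any cycle through $v$ contributes at most $\hat w_v \cdot \kw{querynum}$.
\end{itemize}
If this invariant cannot be established directly for all cycle combinations, the fallback is a coarser bound that the algorithm still dominates. The number of queries in the segment is at most the total query count of $S$ times the maximum over cycles of the minimal weight. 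We would then argue that the value the algorithm returns for the start vertex maximising this product is at least that coarse bound.
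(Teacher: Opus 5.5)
Your outer reduction is fine: cutting a walk into maximal segments that stay inside one SCC, using additivity of $\qlen$, and bounding the total by a longest path in the condensation is routine once a per-SCC bound is available. The gap is that per-SCC bound, which is the entire content of the theorem. You flag it as the main obstacle, but neither the invariant nor the fallback you offer is established.

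Your description of the DFS is also inaccurate, and the two lines do not compute cycle products:
\begin{itemize}
\item Line~9 multiplies the running minimum of the weights along the DFS \emph{tree path from the root} by the number of queries on that path. The root's own weight is left out at first, because $\kw{flowcapacity}$ starts at $\infty$.
\item Line~12 fires on \emph{every} edge into an already visited vertex, including forward and cross edges. It uses $\kw{flowcapacity}[s]$ and $\kw{querynum}[s]$, which describe the root-to-$s$ path, not the cycle closed by $s\to v$.
\end{itemize}
So there is no correspondence between the simple cycles in a decomposition of your walk and the products the algorithm forms. Your claim that the DFS ``computes exactly such products'' is exactly what would need proof, and it is not what the algorithm does.

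The idea you are missing is that $\progE(c)$ is defined as the transitive closure of $\flowsto$. Every nontrivial SCC $S$ of $\progG(c)$ is therefore a complete digraph with a self-loop at every vertex. The maximal query length of a walk inside $S$ is then exactly $\sum_{v\in S,\ \qflag(v)=1}\hat{w}_v$.

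Without this fact your fallback inequality fails. Take edges $v\leftrightarrow a$ and $v\leftrightarrow b$, with $\hat{w}_v=M\ge 3$, $\hat{w}_a=\hat{w}_b=1$, and all three vertices queries. The walk $v\,a\,v\,b\,v$ has query length $5$, while your bound gives $3\cdot 1=3$.

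With the fact, the inequality does hold. But its second half, that the algorithm's output dominates it, is the same unproved claim, and it depends on the exploration order. Consider the following instance.
\begin{itemize}
\item Let $S=\{s,h\}$ be complete with self-loops, with $\hat{w}_s=1$, $\qflag(s)=0$, $\hat{w}_h=k$, $\qflag(h)=1$.
\item List $s$ before $h$ in both adjacency lists, as happens when neighbours are ordered by program label.
\item This $S$ arises, for example, from $\assign{s}{h}$ in the body of an outer loop that provably runs once, followed by a nested $k$-iteration loop containing $\assign{h}{\query(\chi[s])}$.
\end{itemize}
The root is never marked visited before the call, and $\kw{flowcapacity}$ is a running minimum. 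Hence, from either root, every factor $\kw{flowcapacity}[v]$ at line~9 and every factor $\min(\cdot,\kw{flowcapacity}[s])$ at line~12 is at most $1$. Tracing lines~4--13 with $\kw{r}$ initialised to $0$ gives values at most $8$. Meanwhile the walk $h\,h\cdots h$ has query length $k$.

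So your invariant, that $\kw{r}[v]$ bounds every weight-respecting walk from the root to $v$, cannot be proved by induction on the DFS for an arbitrary order. Read literally, with $\kw{r}$ initialised to $\infty$ as printed, the algorithm instead returns $\infty$ on every nontrivial SCC.

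A proof must therefore first fix the initialisation, the returned value and the neighbour order, in particular when self-loops are explored. Only then can it show that the accumulated $\kw{r}$-values dominate $\sum_{v\in S,\ \qflag(v)=1}\hat{w}_v$. Your proposal does none of this.
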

}

}
\section{Path-sensitivity}
\label{sec:examples}
\label{ex:multipleRounds}
\label{ex:multiRoundsS}

\highlight{In general, $\THESYSTEM$ is not path-sensitive.} We show this in one example, $\kw{multiRoundsOdd(k)}$, presented in Fig.~\ref{fig:multiRoundsOdd}(a). 
The problem witnessed by this example occurs when the control flow is more sophisticated than what the static analysis can handle.
$\kw{multiRoundsOdd}(k)$
has adaptivity $1 + k$ and a  while loop with two paths.
In each iteration, the queries $\clabel{\assign{y}{\query(\chi[x])}}^{5}$
and $\clabel{\assign{p}{\query(\chi[x])}}^{6}$ are based on the results of previous queries stored in $x$.
Only the query answer from $\clabel{\assign{y}{\query(\chi[x])}}^{5}$ in the first branch
is used in the query request command at line $7$, $\clabel{\assign{x}{\query(\chi(\ln(y)))} }^{7}$.
However, this branch is only executed in even iterations ($j = 0, 2, \cdots $).
From the semantics-based dependency graph in Figure~\ref{fig:multiRoundsOdd}(b),
the weight function \highlight{$\kw{lastVal}(\trace, \lceil  \frac{k}{2} \rceil  )$   }for the vertex $y^5$ counts the
number of times $\clabel{\assign{y}{\query(\chi[x])}}^{5}$ is evaluated during the program execution under an initial trace $\trace$, i.e., half of the initial value of $k$ from $\trace$.
However, {\THESYSTEM} fails to realize that all the odd iterations only execute the first branch
and that only even iterations execute the second branch.
So it considers both branches for every iteration when estimating the adaptivity.
In this sense, the weight estimated for $y^5$ and $p^6$ are both
$k$ as in Figure~\ref{fig:multiRoundsOdd}(c).
As a result, {\THESYSTEM} computes $y^5  \to x^7  \to y^5  \to \cdots \to x^7 $
as the walk of maximal query length in Figure~\ref{fig:multiRoundsOdd}(c)
where each vertex is visited $k$ times, and so the estimated adaptivity is $1 + 2 * k$, instead of $1 + k$.

{ \small
\begin{figure}
\centering
    \begin{subfigure}{0.24\textwidth}
\centering
\small{
    \[
    \begin{array}{l}
        \kw{multiRoundsOdd}(k) \triangleq \\
        \clabel{ \assign{j}{k}}^{0} ; \\
        \clabel{ \assign{x}{\query(\chi[0])} }^{1} ; \\
            \ewhile ~ \clabel{j > 0}^{2} ~ \edo ~
            \Big(  \\
             \clabel{\assign{j}{j-1}}^{3} ; \\
             \eif(\clabel{j \% 2 == 0}^{4}, \\
             \clabel{\assign{y}{\query(\chi[x])}}^{5}, \\
             \clabel{\assign{p}{\query(\chi[x])}}^{6});  \\
             \clabel{\assign{x}{\query(\chi(\ln(y)))} }^{7} \Big)
        \end{array}
    \]
}
\vspace{-0.4cm}
\caption{}
    \end{subfigure}
\begin{subfigure}{.36\textwidth}
    \begin{centering}
    \begin{tikzpicture}[scale=\textwidth/15cm,samples=200]
\draw[] (8, 1) circle (0pt) node{{ $x^1: {}^{ \tau \to 1}_{1}$}};
 \draw[] (0, 7) circle (0pt) node{{ $y^5: {}^{ \tau \to \lceil \frac{k}{2}\rceil}_{1} $}};
 \draw[] (0, 4) circle (0pt) node{{ $p^6: {}^{\tau \to  \lfloor \frac{k}{2}\rfloor }_{1} $}};
 \draw[] (0, 1) circle (0pt) node{{ $x^7: {}^{ \tau \to k }_{1}$}};
 \draw[] (8, 7) circle (0pt) node {{$j^0: {}^{\tau \to 1}_{0}$}};
 \draw[] (8, 4) circle (0pt) node {{ $j^3: {}^{ \tau \to k}_{0}$}};
 \draw[  -latex,]  (0, 3.5) -- (0, 1.5) ;
 \draw[  -Straight Barb] (6.5, 4.5) arc (150:-150:1);
 \draw[  -latex] (8, 4.5)  -- (8, 6.5) ;
\draw[ thick,  -latex, densely dotted] (-0.6, 1.5)  to  [out=-220,in=220]  (-0.5, 6.5);
\draw[ thick, -latex, densely dotted]  (0.5, 6.5) to  [out=-30,in=30] (0.6, 1.6) ;
 \draw[ -latex] (2.0, 7)  -- (7, 6.5) ;
 \draw[ -latex] (2.0, 4)  -- (7, 6.5) ;
 \draw[ -latex] (2.0, 1)  -- (7, 6.5) ;
\draw[ -latex] (2.0, 7)  -- (7, 4) ;
\draw[ -latex] (2.0, 4)  -- (7, 4) ;
\draw[ -latex] (2.0, 1)  -- (7, 4) ;
\draw[ -latex] (2.0, 7)  -- (7, 1.5) ;
\draw[ -latex] (2.0, 4)  -- (7, 1.5) ;
\draw[  -latex,] (2.0, 1)  -- (7, 1.5) ;
\draw[ -latex ]  (0, 4.5) -- (0, 6.5) ;
\draw[ -latex ] (0.8, 7.5) arc (220:-100:1);
\draw[ -latex ] (1.2, 1.0) arc (120:-200:1);
\end{tikzpicture}
\vspace{-0.55cm}
\caption{}
    \end{centering}
    \end{subfigure}
    \begin{subfigure}{.36\textwidth}
        \begin{centering}
        \begin{tikzpicture}[scale=\textwidth/15cm,samples=200]
    \draw[] (5, 1) circle (0pt) node{{ $x^1: {}^1_{1}$}};
     \draw[] (0, 7) circle (0pt) node{{ $y^5: {}^{k}_{1}$}};
     \draw[] (0, 4) circle (0pt) node{{ ${p^6: {}^{k}_{1}}$}};
     \draw[] (0, 1) circle (0pt) node{{ ${x^7: {}^{k}_{1}}$}};
     \draw[] (5, 7) circle (0pt) node {{$j^0: {}^{1}_{0}$}};
     \draw[] (5, 4) circle (0pt) node {{ $j^3: {}^{k}_{0}$}};
 \draw[  -latex,]  (0, 3.5) -- (0, 1.5) ;
 \draw[  -Straight Barb] (6.5, 4.5) arc (150:-150:1);
 \draw[  -latex] (5, 4.5)  -- (5, 6.5) ;
 \draw[  -latex,] (1.5, 1)  -- (4, 1) ;
    \draw[thick, -latex, densely dotted] (-0.6, 1.5)  to  [out=-220,in=220]  (-0.5, 6.5);
    \draw[thick, -latex, densely dotted]  (0.5, 6.5) to  [out=-30,in=30] (0.6, 1.6) ;
 \draw[ -latex] (1.5, 7)  -- (4, 6) ;
 \draw[ -latex] (1.5, 4)  -- (4, 6) ;
 \draw[ -latex] (1.5, 1)  -- (4, 6) ;
\draw[ -latex] (1.5, 7)  -- (4, 4) ;
\draw[ -latex] (1.5, 4)  -- (4, 4) ;
\draw[ -latex] (1.5, 1)  -- (4, 4) ;
\draw[ -latex] (1.5, 7)  -- (4, 1) ;
\draw[ -latex] (1.5, 4)  -- (4, 1) ;
\draw[ -latex ]  (0, 4.5) -- (0, 6.5) ;
\draw[ -latex ] (0.8, 7.5) arc (220:-100:1);
\draw[ -latex ] (1.2, 1.0) arc (120:-200:1);
     \end{tikzpicture}
     \vspace{-0.55cm}
     \caption{}
        \end{centering}
        \end{subfigure}
        \vspace{-0.4cm}
\caption{(a) The multiple rounds odd example
(b) The semantics-based dependency graph
(c) The estimated dependency graph from $\THESYSTEM$.}
    \label{fig:multiRoundsOdd}
    \vspace{-0.25cm}
\end{figure}
}

\section{Implementation}
\label{sec:implementation}

\paragraph{Set Up}
We implemented $\THESYSTEM$ as a tool that takes a labeled command as input
and outputs two upper bounds on the program adaptivity and the number of query requests, respectively.
This implementation consists of components for the
abstract control flow graph generation,
the edge estimation, and the weight estimation in Ocaml,
and of an implementation of the $\pathsearch$ algorithm in Python.

\paragraph{Accuracy and Performance}
{\scriptsize
\begin {table}[t]
    \caption{Accuracy and Performance Evaluation of {\THESYSTEM} implementation}
    \vspace{-0.4cm}
        \label{tb:adapt-imp}
        \begin{center}
        \centering
{
        \begin{tabular}{| >{\tiny}r | c | c | c | c | c | c | c | c | c | c | c }
         \hline \hline
        \multirow{2}{*}{Program $c$} &
        \multirow{2}{*}{\emph{adaptivity}}
         & \multicolumn{2}{c|}{$\THESYSTEM$}
         & \multirow{2}{*}{L.O.C}
         & \multicolumn{3}{c|}{running time (second)} \\
         \cline{3-4} \cline{6-8}
         & & {$\progA$ } & {$\query$\# } &  & graph & weight & $\pathsearch$ \\
         \cline{6-8}
         \hline \hline
         $  \kw{twoRounds(k)}$ & $2$ &  $2$ & $k+1 $  & 8 & 0.0005 & 0.0017 & 0.0003 \\
         $  \kw{mR(k)}$ & $k$ &  $k$ & $k$  &  10 & 0.0012 & 0.0017  & 0.0002 \\
         $  \kw{lRGD(k, r)}$ & $k$ & $k  $ & $ 2k $  &  10 & 0.0015 & 0.0072  & 0.0002  \\
         $  \kw{mROdd(k)}$ & $1 + k$ &  $2+\max(1,2k)  $ & $1 + 3 k  $  &  10 & 0.0015 & 0.0061  & 0.0002 \\
         $  \kw{mRSingle(k)}$    & $1 + k$ &  $1+ \max(1, k) $ & $1 + k $  &  9 & 0.0011 & 0.0075  & 0.0002 \\
         $  \kw{seqRV()}$ & $4$ & $4$ &  $4$ & 4 & 0.0011 & 0.0003 & 0.0001 \\
         $  \kw{ifVD()}$ & $2$ & $2$ &  $3$ & 5 & 0.0010 & 0.0005  & 0.0001 \\
         $  \kw{ifCD()}$ & $3$ & $3 $ &   $4$  & 5 & 0.0005 & 0.0003   & 0.0001 \\
         $  \kw{loop(k)}$ & $1+k/2$ &   $1 +\max(1, k/2)  $  &  $1+k/2  $ & 7 & 0.0021 & 0.0015 &  0.0001 \\
         $  \kw{loopRV(k)}$ & $1 + 2k$ &  $1 + 2k$ & $2 + 3 k$  &  9 & 0.0016 & 0.0056 & 0.0001  \\
         $  {\kw{loopVCD(k)}} $ & ${1 + 2Q_m}$ &  ${Q_m+\max(1,2Q_m)}$  & $2+2Q_m$   &  6 & 0.0016 & 0.0007 & 0.0001 \\
         $ {\kw{loopMPVCD(k)}}$ & $2+Q_m$ &  $2 + Q_m$  & $2+2Q_m$   &   9 & 0.0017 & 0.0043  & 0.0001 \\
         $  \kw{loop2VD(k)}$ & $2 + k^2$ &   $3 + k^2$ & $1 + k + k^2 $   &  10 & 0.0018 & 0.0126  & 0.0001  \\
         $  \kw{loop2RV(k)}$ & $1 + k +  k^2$ &
         $ 2 + k +  k^2 $
         &  $2 + k + k^2$   &  10 & 0.0017 & 0.0186  & 0.0001  \\
         $  \kw{loop2MV(k)}$ & $1 + 2k $ & $1 + \max(1,2k) $ &  $1 + k + k^2 $  & 10 & 0.0016 & 0.0071  & 0.0001 \\
         $ \kw{loop2MPRV(k)}$ & $1 + k + k^2$ &  $3 + k + k^2  $ &  $2 + 2k + k^2  $  &  10 & 0.019 & 0.0999  & 0.0002 \\
         {$ \kw{loopM(k)}$} & $1 + k$ &  $ 2 + \max(1,2k) $ & $1 + 3k  $  &  9 & 0.0017 & 0.0062  & 0.0001  \\
         {$ \kw{loopM2(k)}$} & $1 + k$ &  $ 2 + k $ & $1 + 3k  $  &  9 & 0.0017 & 0.0062  & 0.0001  \\
         {$\kw{loop2R(k)}$} & $1 + 3k$ &  $1 + 3k $ &  $1 + 3k  + k^2$  &  11 & 0.019 & 0.2669  & 0.0007 \\
         $  \kw{mR(k, N)}$ & $k$ & $ k   $ & $k $   &  27 & 0.0026 & 85.9017  & 0.0004 \\
         $  \kw{mRCom(k)}$ & $2k$ & $  2k $ & $ 2k $   &  46 & 0.0036 & 5104  &  0.0013\\
         $  \kw{seqCom(k)}$ & $12$ & $12  $  & $326 $  &  502 & 0.0426  & 1.2743  & 0.0223 \\
         $  \kw{tRCom(k)}$ & $2$ &  $ * $ &  $ 1 + 5k + 2 k^2$  &  42 & 0.0026 & $*$  & $*$\\
         $  \kw{{jumbo(k)}}$ & $ \max(20, 8+k^2)$ &  $ * $   &   $ {44+k+k^2} $  &  71 & 0.0035 & $*$ &  $*$ \\
         $  {\kw{big(k)}} $ & $22+k+k^2$ &  $* $ &  $121+11k+4k^2 $  &  214 & 0.0175 & $*$ & $*$ \\
         \hline \hline
        \end{tabular}
}
\end{center}
\vspace{-0.2cm}
\end{table}
}

We evaluate our implementation in terms of its accuracy and performance in estimating the adaptivity
through $25$ example programs (summary in Table~\ref{tb:adapt-imp}).
For each example we compare the true accuracy with the estimated adaptivity, and we report the estimated number of queries, the L.O.C and the running time for the graph construction (graph), the weight estimation (weight) and the adaptivity computation ($\pathsearch$).
The set of examples includes the ones we have described before, such as two rounds $\kw{twoRounds(k)}$ in
Fig.~\ref{fig:overview-example}(a) and multiple rounds $ \kw{mR(k)}$ and its variants $\kw{mROdd(k)}$ $\kw{mROdd(k)}$,
a logistic regression algorithm with gradient descent (short for $\kw{LRGD}$),
handcrafted programs from the $6^{th}$ ($\kw{seqRV}$) to $19^{th}$ ($\kw{loop2R}$) based on benchmarks from the work by Gulwani et al.~\cite{GulwaniJK09}. These examples have small sizes but complex structures to test the programs under different situations including
data dependency, control dependency, and the multiple paths nested loop with related counters, etc. The names of these programs obey the convention that,
$\kw{if}$ means there is an if control in the program;
$\kw{loop}$ means there is a while loop and $\kw{loop2}$ represents two levels nested loop in the program;
$\kw{C}$ denotes Control;
$\kw{D}$ for Dependency; $\kw{V}$ for Variable;
$\kw{M}$ for Multiple; $\kw{P}$ for Path and $\kw{R}$ for Related.
The last six programs are synthesized programs composed of the previous programs in order to test the performance limitation when the input program is large.

In terms of accuracy, our tool provides a tight
upper bound on the adaptivity for the most of those examples.
 For the $4^{th}$ program $\kw{mROdd(k)}$, $\THESYSTEM$ outputs an over-approximated upper bound
 $2 + \max(1, 2k)$ because of the path-insensitive nature of the weight estimation algorithm, as discussed in Section~\ref{sec:examples}.
 Similarly for the $17^{th}$ example $\kw{loopM}$.
$\THESYSTEM$ outputs $1 + \max(1, k) $ for the $5^{th}$ program $\kw{mRSingle(k)}$,  tight by our adaptvitiy definition (Definition~\ref{def:trace_adapt}).
However, Definition~\ref{def:trace_adapt} does not capture precisely the accuracy  of $\kw{mRSingle(k)}$,
and the result is still an over-approximation.

In terms of performance, our tool can quickly construct the graph, estimate the weight,
and provide the adaptivity upper bound for the first $19$ examples. \highlight{ $\THESYSTEM$ uses only one second
for the example $\kw{seqCom}$ with the largest size of $502$ lines
of code. However, when it comes to complicated examples with multiple high-level nest loops such as $tRCom$ (two $2-$level and three single-level while loops), $jumbo$ (one $2-$level and two single-level while loops) and $big$ (two $3-$level, three $2-$level and one single-level while loops),
$\THESYSTEM$ will take too long to handle them. As shown in Table~\ref{tb:adapt-imp}, these three examples have the running time $*$ meaning too long to compute.
The performance bottleneck is the reachability bound analysis algorithm used in $\THESYSTEM$.}

{\paragraph{Alternative Implementations}
{\scriptsize
\begin {table}[t]
\vspace{-0.1cm}
    \caption{Accuracy Evaluation of {\THESYSTEM}  Alternative Implementations}
    \vspace{-0.3cm}
        \label{tb:adapt-imp-alternatives}
        \begin{center}
        \centering
{\scriptsize
        \begin{tabular}{ | >{\tiny}l | c | c | c | c | c | c | c | c | c  |c}
        \hline \hline
        \multirow{2}{*}{Program $c$}
        &\multirow{2}{*}{\emph{adaptivity}}
         & \multicolumn{2}{c|}{$\THESYSTEM$-I}
         & \multicolumn{2}{c|}{$\THESYSTEM$-II}
         & {running time} \\
         \cline{3-9}
        & & {$\progA$ } & {$\query$\# } & {$\progA$ } & {$\query$\# }  & $\THESYSTEM$-I \\
         \cline{8-9}
         \hline \hline
          $  \kw{ifCD()}$  &  $3$  & $3$ &   $4$  & \textcolor{red}{$2$} & $4$ & 0.0007 \\
          $ {\kw{loopMPVCD(k)}}$ &  ${2 + Q_m}$  &  $2 + Q_m$  & $2+2Q_m$  & \textcolor{red}{$2$} & $2+2Q_m$ & 0.0020 \\
          {$\kw{loop2R(k)}$} &  $1+3k$ &  \textcolor{red}{$2 + 3k + k^2$} &  \textcolor{red}{$1 + k + k^2$}  &  $2 + 3k + k^2$ &  $1 + k + k^2$ & 0.0199 \\ 
         $  \kw{tRCom(k)}$ &  $2$ &  $ 2$ & $ 1 + 5k + 2 k^2 $  &  $ * $   &   $* $  & 0.0034  \\ 
         $  \kw{{jumbo(k)}}$&  $ \max(20, 8+k^2)$  & $  \max(20, 6+k+k^2)$   &   $ {44+k+k^2} $  &  $ * $   &  $* $ & 0.0123 \\
         $  {\kw{big(k)}} $&  $22+k+k^2$  &   $28 + k + k^2$ &  $121+11k+4k^2 $  &  $ * $   &  $* $  & 0.0181 \\ 
        \hline \hline
        \end{tabular}
}
\end{center}
\vspace{-0.5cm}
\end{table}
}

\highlight{To overcome the performance bottleneck of the underlying implementation of reachability bound analysis,
we implemented {\THESYSTEM}-I, which replaces the reachability bound analysis algorithm
with a simpler one that handles nested while loop effectively but coarsely.
{\THESYSTEM}-I provides a less precise but efficient analysis for these three examples as shown in Table~\ref{tb:adapt-imp-alternatives}.
As for the examples without complicated nest loops, in general, {\THESYSTEM}-I gives tight upper bounds in most examples as {\THESYSTEM} does,
except for the example $\kw{loop2R}$ ($2+3k+k^2$ versus $1+3k$).}

\highlight{To conclude, we treat the alternative implementation {\THESYSTEM}-I as a complement to
$\THESYSTEM$ when the latter cannot handle "complicated" examples with multi-level nested loops.
We do not expect these complex multi-level nested loops to appear in many realistic examples,
but it is an option we can resort to if needed.}
We also have another implementation {\THESYSTEM}-II obtained by removing the control flow analysis. The results are unsound (not upper bounds) for examples such as $\kw{ifCD}$, $ \kw{loopMPVCD(k)} $.
We show the results of these algorithms in Table~\ref{tb:adapt-imp-alternatives}, we highlight in red the results that are imprecise or unsound.

\paragraph{Effectiveness Evaluation}

{\footnotesize
\begin {table}[t]
        \caption{Evaluation of Data Analyses Generalization Error Using {\THESYSTEM}}
    \vspace{-0.4cm}
        \label{tb:adapt-generalization}
        \begin{center}
        \centering
{
        \begin{tabular}{|| >{\tiny}l || c | c || c | l | c | r || c | l | c | r || }
                \hhline{t|:=========== :t:|}
        \multirow{2}{*}{Program $c$}
         & \multicolumn{2}{c||}{$\THESYSTEM$}
         & \multicolumn{4}{c||}{rmse with mechanism(k,m,n=10)}  & \multicolumn{4}{c||}{rmse with mechanism(k,m,n=1000)}  \\
         \hhline{||~--||----||----||}
         & {$\progA$ } & {$\query$\# }  & None  & DS & GS & TS & None & DS & GS & TS \\
         \hline \hline
        $  \kw{ twoRounds }$ & $ 2 $ & $  k + 1 $  & $0.0006$   & {{$0.0015$}} & \textcolor{red}{$0.0004$} & \textbf{0.001}& $0.050$   & \textcolor{red}{\textbf{0.028}} & {$0.031$} & $0.040$  \\
        \hhline{||-||---||-||--||----||}
         $  \kw{ mR}$ & $k$ & $k$  & $0.16$   & $0.1545$  & $0.1087 $ & \textcolor{red}{\textbf{0.1035}}  & $0.066$   & $0.050$ & \textcolor{red}{\textbf{0.036}} & $0.064$  \\
         \hhline{||-||---||-||--||----||}
         $  \kw{ mROdd }$ & $ k $   & $  2 \times k $ & $0.9375$   & $0.999$ & $0.7427$ & \textcolor{red}{\textbf{0.4016}} & $0.211$   & $0.220$ & \textcolor{red}{\textbf{0.059}} & $0.171$  \\
         \hhline{||-||---||-||--||----||}
         $  \kw{ mRSingle }$ & $ k $ & $  k $  & $0.8074$   & \textcolor{red}{$0.3600$} & $ 0.7506$ & {\textbf{0.4036}} & $ 0.761$   & $ 0.758$ & \textcolor{red}{\textbf{0.509}} & $ 0.593$  \\
         \hhline{||-||---||-||--||----||}
         $  \kw{ lRGD }$ & $ k $ & $  2\times k $  & $0.12$   & $0.116$ & $ 0.10 $ & \textcolor{red}{\textbf{0.06}} & $0.216$   & $0.209$ & \textcolor{red}{\textbf{0.014}} & $0.210$  \\
         \hhline{||-||---||-||--||----||}
         $  \kw{ 3DimLRGD }$ & $ k $ & $  3\times k $  & $0.1159$   & $0.10$ & $0.1079$ & \textcolor{red}{\textbf{0.092}} & $0.1966$   & $0.1901$ & \textcolor{red}{\textbf{0.1751}} & $0.1810$  \\
         \hhline{||-||---||-||--||----||}
         $  \kw{ 4DimLRGD }$ & $ k $ & $  4\times k $   & $0.61$   & \textcolor{red}{$0.1080$} & $0.30 $ & \textbf{0.1399} & $0.1112$   & $0.1032$ & \textcolor{red}{\textbf{0.0961}} & $0.1000$  \\
         \hhline{||-||---||-||--||----||}
         $  \kw{ 3DeLRGD }$ & $ k $ & $  3\times k $ & $0.10$  & $0.10$ & $0.06$ & \textcolor{red}{\textbf{0.04}} & $0.1096$   & $0.1056$ & \textcolor{red}{\textbf{0.0098}} & $0.1004$  \\
         \hhline{||-||---||-||--||----||}
         $  \kw{ 4DeLRGD }$ & $ k $ & $  4\times k $  & $0.076$  & $0.0984$ & $0.0719$ & \textcolor{red}{\textbf{0.064}}  & $0.1084$   & $0.1058$ & \textcolor{red}{\textbf{0.1052}} & $0.1055$   \\
         \hhline{||-||---||-||--||----||}
        $\kw{DT}$ & $k$ &  $k$ & $0.0948$  & $0.0948$ & $0.0447$ & \textcolor{red}{\textbf{0.0383}} & $ 1.465$  &  \textcolor{red}{$ 1.283$} & \textbf{1.379 } & {$1.414$}   \\
         \hhline{||-||---||-||--||----||}
         $\kw{LR}$ & $k$ &  $k$ & $0.1316$  & $0.1219$ & $ 0.0893$ & \textcolor{red}{\textbf{0.0632}} & $ 0.152$  &  $ 0.001$ & \textcolor{red}{\textbf{0.001}} & {$0.002$}   \\
         \hhline{||-||---||-||--||----||}
        $\kw{DTOVR}$ & $k \times m$ &  $ k \times m $  & $0.14142$  & $0.1241$ & \textcolor{red}{\textbf{0.0864}} & $ 0.1167$ &  $0.055$ & $0.053$  &  \textcolor{red}{\textbf{0.007}} & $0.036$  \\
        \hhline{||-||---||-||--||----||}
        $\kw{LROVR}$  & $k \times m$ &  $ k \times m $  & $0.0957$  & $0.0917$ & \textcolor{red}{\textbf{0.0815}} & $0.1092$  &  $ 1.000  $  &  $ 1.000 $ & \textcolor{red}{\textbf{ 0.999}} & $ 1.002 $  \\
        \hhline{||-||---||-||--||----||}
        $\kw{RQ}$~\cite{Jamieson2015TheAO} & $ m \times 2^m $ & $  m \times 2^m $  & $0.75$  & \textcolor{red}{$0.89$} & \textbf{6.01} & $1.89$ & $239.0$   & $21.5$ & \textcolor{red}{\textbf{18.557}} & $141.974$   \\
        \hhline{||-||---||-||--||----||}
        $\kw{nDPair}$~\cite{Jamieson2015TheAO} & $ m $ & $  m \times n  $   & $0.34$  & \textcolor{red}{$0.10$} & $0.19$ & \textbf{0.27} & $0.0999$   & $0.0999$ & \textcolor{red}{$0.0970$} & \textbf{0.0999}   \\
        \hhline{||-||---||-||--||----||}
        $\kw{bestArm}$~\cite{Jamieson2015TheAO} & $ n $ & $  m \times n $  & $1.82$  & $1.46$ & $2.67$ & \textcolor{red}{\textbf{0.81}}& $ 2.0452$   & $ 1.3955$ & {{$3.4147$}} & \textcolor{red}{\textbf{1.2871}} \\
        \hhline{||-||---||-||--||----||}
        $\kw{lilUCB}$~\cite{Jamieson2015TheAO} & $ n $ & $ m \times n $ & $3.07$  & $3.11$ & $3.19$ & \textcolor{red}{\textbf{1.79}}& $3.0174$   & $ 3.137$ & {$3.5245$} & \textcolor{red}{\textbf{2.3865}}   \\
        \hhline{|:t=========== t:|}
\end{tabular}
}
\end{center}
\vspace{-0.5cm}
\end{table}
}

In our last evaluation, we consider the effectiveness of {\THESYSTEM} in terms of reducing the generalization error of real-world data analysis programs.
We do this by considering a set of benchmarks including
nine classical data analyses taken from examples in Table~\ref{tb:adapt-imp} instantiated with various parameters,
four programs implementing the algorithms from~\cite{Jamieson2015TheAO}
and four data analysis programs
from \hyperlink{https://github.com/scikit-learn/scikit-learn/tree/main/examples}{sklearn}~\cite{SklearnBenchmark} benchmark.
{We use acronyms for these programs to describe the analyses they perform and fit the evaluation summary in a single table. We use LR for logistic regression, GD for gradient descent, mR for multiple round, DT for decisionTree, RQ for repeatedQuery, OVR for one-vs-rest, De for degree, Dim for dimension, Pair for pairwise. Similarly, we shorten the mechanisms: GS for Gaussian, DS for Data Split,
TS for ThresholdOut.  }

For each program, we show in Table~\ref{tb:adapt-generalization} the
generalization errors when running without a mechanism and running
with different mechanisms over uniformly generated training data.  The
average generalization error of each program is measured by
root-mean-square error (rmse).  The root-mean-square error without any
mechanism is shown in column ``None''.
We mark the best rmse in red.
We mark in bold the rmse
produced by the mechanism chosen by {\THESYSTEM} using the following
heuristics.  According to Theorem~\ref{thm:nonadapt-adapt},\ref{thm:gaussiannoise},
\ref{thm:gaussiannoise2},
we compare the value of $\sqrt{\query \#}$ and
$\progA\sqrt{\log(\query \#)}$ and choose DataSplit if
$\sqrt{\query \#} \gg \progA\sqrt{\log(\query \#)}$ and Gaussian
mechanism if $\sqrt{\query \#} \ll \progA\sqrt{\log(\query \#)}$ and
Thresholdout mechanism if these two quantities are close.
 It is worth
stressing that these are rough heuristics since we are omitting the
constants hidden in these theorems. We leave the problem to identify
the best heuristic to future work.

    According to our heuristics, our examples fall into three categories:
    $\progA$ is much larger than $\query \#$, $\progA$ is close to  $\query \#$,
    and $\progA$ is much smaller than $\query \#$.
    Since both $\progA$ and $\query \#$ are symbolic, the decision of our heuristics relies on
    the instantiation of input parameters such as $k,m, n$. To better evaluate our heuristics, we present two different instantiations $k,m,n=10$ and $k,m, n =1000$ so that
    for the same example, our tool will choose different mechanisms.

We can see in Table~\ref{tb:adapt-generalization} that the mechanism
chosen by our heuristics (marked in bold) is the best in most cases.
In the setting of small parameters $k,m,n=10$, our heuristics
choose most of the times ThresholdOut. Exceptions are examples such as $\kw{DTOVR}$,
$\kw{LROVR}$
from \hyperlink{https://github.com/scikit-learn/scikit-learn/tree/main/examples}{sklearn}~\cite{SklearnBenchmark}
benchmark, and $\kw{RQ}$ from~\cite{Jamieson2015TheAO}, for which our tool
chooses the Gaussian mechanism.  In the setting of large parameters
$k,m,n=1000$,
for most examples our tool chooses the Gaussian mechanism.
For $\kw{twoRounds}$, our heuristics suggest the DataSplit mechanism. For examples $\kw{lilUCB}$,
$\kw{nDPair}$ and $\kw{bestArm}$ from~\cite{Jamieson2015TheAO} within
the \emph{Guess and Check}~\cite{RogersRSSTW20} framework, our tool chooses the Thresholdout mechanism.

\section{Related Work}

\paragraph{Dependency Definitions and Analysis}
There is a vast literature on dependency definitions and dependency analysis.
We consider a semantics definition of dependencies which consider (intraprocedural) data and control dependency~\cite{bilardi1996framework,cytron1991efficiently,pollock1989incremental}.
Our definition is inspired by classical works on traditional dependency analysis~\cite{DenningD77} and noninterference~\cite{GoguenM82a}.
Formally, our definition is similar to the one by \citet{Cousot19a}, which also identifies dependencies by considering differences in two execution traces.
However, Cousot excludes some forms of implicit dependencies, e.g. the ones generated by empty observations,  which instead we consider.
Common tools to study dependencies are dependency graphs~\cite{ferrante1987program}. We use here a semantics-based approach to generating the dependency graph similar, for example, to works by \citet{austin1992dynamic}, \citet{hammer2006dynamic} and \cite{mastroeni2008data}.
Our approach shares some similarities with the use of dependency graphs in works analyzing dependencies between events, e.g. in event programming. \citet{memon2007event} uses an event-flow graph, representing all the possible event interactions, where vertices are GUI event edges represent pairs of events that can be performed immediately one after the other. In a similar way, we use edges to track the may-dependence between variables looking at all the possible interactions.
\citet{arlt2012lightweight} use a weighted edge indicating a dependency between two events, e.g. one event possibly reads data written by the other event, with the weight showing the intensity of the dependency (the quantity of data involved). We also use weights but on vertices and with different meaning, they are functions describing the number of times the vertices can be visited given an initial state.
Differently from all these previous works, we use a dependency graph with quantitative information needed to identify the length of chain of dependencies. Our weight estimation is inspired by  works in complexity analysis and WCET.
Specifically, it is inspired by works on  reachability-bound analysis using program abstraction and invariant inference~\cite{GulwaniZ10, SinnZV17,GulwaniJK09} and work on invariant inference through cost equations and ranking functions~\cite{BrockschmidtEFFG16,AlbertAGP08,AliasDFG10,Flores-MontoyaH14}.
\paragraph{Generalization in Adaptive Data Analysis}
Starting from the works by \citet{DworkFHPRR15} and \citet{HardtU14}, several works have designed methods that ensure generalization for adaptive data analyses~\cite{dwork2015reusable,dwork2015generalization,BassilyNSSSU16,UllmanSNSS18,FeldmanS17,jung2019new,SteinkeZ20,RogersRSSTW20}.
Several of these works drew inspiration from differential privacy, a notion of formal data privacy. By limiting the influence that an individual can have on the result of a data analysis, even in adaptive settings, differential privacy can also be used to limit the influence that a specific data sample can have on the statistical validity of a data analysis. This connection is actually in two directions, as discussed for example by \citet{YeomGFJ18}.
Considering this connection between generalization and privacy, it is not surprising that some of the works on programming language techniques for privacy-preserving data analysis are related to our work.
Adaptive Fuzz~\cite{Winograd-CortHR17} is a programming framework for differential privacy that is designed around the concept of adaptivity.
This framework is based on a typed functional language that distinguishes between several forms of adaptive and non-adaptive composition theorem with the goal of achieving better upper bounds on the privacy cost. Adaptive Fuzz uses a type system and some partial evaluation to guarantee that the programs respect differential privacy. However, it does not include any technique to bound the number of rounds of adaptivity.
\citet{lobo2021programming} propose a language for differential privacy where one can reason about the accuracy of programs in terms of confidence intervals on the error that the use of differential privacy can generate. These are akin to bounds on the generalization error. This language is based on a static analysis which however cannot handle adaptivity.
The way we formalize the access to the data mediated by a mechanism is a reminiscence of how the interaction with an oracle is modeled in the verification of security properties. As an example, the recent works by \citet{BarbosaBGKS21} and \citet{AguirreBGGKS21} use different techniques to track the number of accesses to an oracle. However, reasoning about the number of accesses is easier than estimating the adaptivity of these calls, as we do instead here.
Another important aspect of data analysis is algorithmic fairness, also related to generalization. Several works explore program analysis techniques for fairness for decision-making programs. Albarghouthi et al.~\cite{albarghouthi2017fairsquare} propose automatic program analysis techniques for fairness verification. Albarghouthi and Vinitsky~\cite{albarghouthi2019fairness} propose a runtime monitor tool for fairness and its failure.

\section{Conclusion and future works}
We presented {\THESYSTEM}, a program analysis useful to provide an upper bound on the adaptivity of a data analysis, as well as on the total number of queries asked. This estimation can help data analysts to control the generalization errors of their analyses by choosing different algorithmic techniques based on the adaptivity. Besides, a key contribution of our work is the formalization of the notion of adaptivity for adaptive data analysis. We showed the applicability of our approach by implementing and experimentally evaluating our program analysis.

As future work, we plan to investigate the potential integration of  {\THESYSTEM} in an adaptive data analysis framework like Guess and check by Rogers at al.~\cite{RogersRSSTW20}. As we discussed, this framework is  designed to support adaptive data analyses with limited generalization error. As our experiments show, this framework could benefit from the information provided by {\THESYSTEM} to provide more precise estimates and improved confidence intervals. Another direction is to make
the upper bounds by {\THESYSTEM} more precise by integrating our algorithm with a path-sensitive approach.

\section*{Data Availability}
Data are available from the authors~\cite{artifact}.

\bibliographystyle{ACM-Reference-Format}
\bibliography{main.bib}

\end{document}